\documentclass{article}

\usepackage{amsmath}
\usepackage{amsthm}
\usepackage{amssymb}
\usepackage{hyperref}
\usepackage{color}
\usepackage{graphicx}
\usepackage{enumerate}
\usepackage{bm}
\usepackage[margin = 1in]{geometry}
\usepackage{bbm}
\usepackage{verbatim}
\usepackage{caption}

\usepackage{algorithm}
\usepackage{algpseudocode}

\theoremstyle{definition} \newtheorem{definition}{Definition}[section]
\theoremstyle{plain} \newtheorem{theorem}[definition]{Theorem}
\theoremstyle{plain} 
\theoremstyle{plain} 
\theoremstyle{plain} 
\theoremstyle{plain} \newtheorem{lemma}[definition]{Lemma}
\theoremstyle{plain} \newtheorem{observation}[definition]{Observation}
\theoremstyle{plain} 
\theoremstyle{definition} 
\theoremstyle{plain} 
\theoremstyle{definition} 
\theoremstyle{remark} 

\newcommand{\eps}{\ensuremath{\varepsilon}}

\newcommand{\mcN}{\ensuremath{\mathcal{N}}}
\newcommand{\mcP}{\ensuremath{\mathcal{P}}}
\newcommand{\mfX}{\ensuremath{\mathfrak{X}}}
\newcommand{\broder}{\ensuremath{\mathfrak{X}_{\textrm{B}}}}
\newcommand{\jsv}{\ensuremath{\mathfrak{X}_{\textrm{JSV}}}}
\newcommand{\permanentalg}{\textsc{Permanent}}

\renewcommand{\P}{\mathcal{P}}
\newcommand{\N}{\mathcal{N}}
\newcommand{\fpras}{\mathsf{FPRAS}}
\newcommand{\poly}{{\sf poly}}
\newcommand{\Tmix}{T_{\mathrm{mix}}}
\newcommand{\dtv}{d_{\mathsf{TV}}}

\title{On Counting Perfect Matchings in General Graphs}

\author{Daniel
\v{S}tefankovi\v{c}\thanks{University of Rochester, USA. Email: \texttt{stefanko@cs.rochester.edu}}
\and
Eric Vigoda\thanks{Georgia Institute of Technology, USA. Email:
\texttt{\{vigoda,wilmesj\}@gatech.edu}}
\and 
John Wilmes$^\dag$}

\begin{document}
\maketitle
\begin{abstract}
Counting perfect matchings has played a central role in the theory of counting problems.
The permanent, corresponding to bipartite graphs, was shown to be \#P-complete to
compute exactly by Valiant (1979),
and a fully polynomial randomized approximation scheme (FPRAS) was presented
by Jerrum, Sinclair, and Vigoda (2004)
  using a Markov chain Monte Carlo (MCMC) approach. However, it has
    remained an open question whether there exists an FPRAS for counting
  perfect  matchings in general graphs. In fact, it was unresolved
      whether the same Markov chain defined by JSV is rapidly mixing in general.
    In this paper, we show that it is not. We prove torpid mixing for
any weighting scheme on hole patterns in the JSV chain.
    As a first step toward overcoming this obstacle, we
    introduce a new algorithm for counting matchings based on the
    Gallai--Edmonds decomposition of a graph, and give an FPRAS for counting
    matchings in graphs that are sufficiently close to bipartite. In
    particular, we obtain a fixed-parameter tractable algorithm for counting
    matchings in general graphs, parameterized by the greatest ``order'' of a
    factor-critical subgraph.
\end{abstract}

\section{Introduction}\label{sec:intro}

Counting perfect matchings is a fundamental problem in the area of counting/sampling problems.
For an undirected graph $G=(V,E)$, let $\P$ denote the set of perfect matchings of $G$.
Can we compute (or estimate) $|\P|$ in time polynomial in $n=|V|$?
For which classes of graphs?

A polynomial-time algorithm for the corresponding decision and optimization
problems of determining if a given graph contains a
perfect matching or finding a matching of maximum size was presented
by Edmonds~\cite{Edmonds}.
For the counting problem, a classical algorithm of Kasteleyn~\cite{Kasteleyn}
gives a polynomial-time algorithm for exactly computing $|\P|$ for planar graphs.

For bipartite graphs, computing $|\P|$ is equivalent to computing the permanent of $n\times n$ $(0,1)$-matrices.   Valiant~\cite{Valiant} proved that the $(0,1)$-Permanent is \#P-complete.
Subsequently attention turned to the Markov Chain Monte Carlo (MCMC) approach.
A Markov chain where the mixing time is polynomial in $n$ is said to be {\em rapidly mixing},
and one where the mixing time is exponential in $\Omega(n)$ is referred to as
{\em torpidly mixing}.  A rapidly mixing chain yields an $\fpras$ (fully polynomial-time
randomized approximation scheme) for the corresponding counting problem of
estimating $|\P|$~\cite{JVV}.

For dense graphs, defined as those with minimum degree $>n/2$, Jerrum and Sinclair~\cite{JS}
proved rapid mixing of a Markov chain defined by Broder~\cite{Broder}, which yielded an $\fpras$ for
estimating $|\P|$.   The Broder chain walks on the collection $\Omega=\P\cup\N$ of
perfect matchings $\P$
and near-perfect matchings $\N$; a near-perfect matching is a matching with
exactly 2 holes or unmatched vertices.
Jerrum and Sinclair~\cite{JS}, more generally, proved rapid mixing when
the number of perfect matchings is within a $\poly(n)$ factor of the
number of near-perfect matchings, i.e., $|\P|/|\N|\geq 1/\poly(n)$.
A simple example, referred to as a ``chain of boxes'' which is
illustrated in Figure~\ref{fig:chain-of-boxes},
shows that the Broder chain is torpidly mixing.   This example was a useful
testbed for catalyzing new approaches to solving the general permanent problem.

Jerrum, Sinclair and Vigoda~\cite{JSV} presented a new Markov chain on $\Omega=\P\cup\N$
with a non-trivial weighting scheme on the matchings based on the holes (unmatched vertices).
They proved rapid mixing for any bipartite graph with the requisite weights used in the
Markov chain, and they presented a polynomial-time algorithm to learn these weights.
This yielded an $\fpras$ for estimating $|\P|$ for all bipartite graphs.  That is the current
state of the art (at least for polynomial-time, or even sub-exponential-time algorithms).

Could the JSV-Markov chain be rapid mixing on non-bipartite graphs?  Previously
there was no example for which torpid mixing was established, it was simply the case
that the proof in~\cite{JSV} fails.  We present a relatively simple example where the
JSV-Markov chain fails for the weighting scheme considered in~\cite{JSV}.  More
generally, the JSV-chain is torpidly mixing on our class of examples for any weighting
scheme based on the hole patterns, see Theorem~\ref{thm:jsv-fails} in
Section~\ref{sec:JSV-fails} for a formal statement following the precise definition
of the JSV-chain.

A natural approach for non-bipartite graphs is to consider Markov chains
that exploit odd cycles or blossoms in the manner of Edmonds' algorithm.
We observe that a Markov chain which considers \emph{all} blossoms for its
transitions is intractable since sampling all blossoms is NP-hard,
see~Theorem~\ref{thm:all-blossoms-hard}. On the other hand, a chain restricted
to minimum blossoms is not powerful enough to overcome our torpid mixing
examples. See Section~\ref{sec:blossoms-fail} for a discussion.

Finally we utilize the Gallai--Edmonds graph decomposition into
factor-critical graphs~\cite{Edmonds,GallaiA,GallaiB,Schrijver} to present
new algorithmic insights that may overcome the obstacles in our classes of
counter-examples.  In Section~\ref{sec:EG}, we describe how the Gallai--Edmonds
decomposition can be used to efficiently estimate $|\P|$, the number of perfect
matchings, in graphs whose factor-critical subgraphs have bounded order
(Theorem~\ref{thm:fpt-alg}), as well as in the torpid mixing example
graphs (Theorem~\ref{thm:counterexample-alg}).

Although all graphs are explicitly defined in the text below, figures depicting
these graphs are deferred to the appendix,

\subsection{Markov Chains}

Consider an ergodic
Markov chain with transition matrix $P$ on a finite state space $\Omega$, and let
$\pi$ denote the unique stationary distribution. We will usually assume the
Markov chain is time reversible, i.e., that it satisfies the \textbf{detailed
balance condition} $\pi(x)P(x,y) = \pi(y)P(x,y)$ for all states $x, y \in
\Omega$.

For a pair of distributions $\mu$ and $\nu$ on $\Omega$ we denote
their total variation distance as $\dtv(\mu,\nu) = \frac{1}{2}\sum_{x\in\Omega}
|\mu(x)-\nu(x)|$.
The
standard notion of \textbf{mixing time}
$\Tmix$ is the number of steps from
the worst starting state $X_0=i$ to reach total variation distance $\leq 1/4$ of
the stationary distribution $\pi$, i.e., we write
$\Tmix = \max_{i\in\Omega} \min\{t: \dtv(P^t(i,\cdot),\pi)\leq 1/4\}$.

We use conductance to obtain lower bounds on the mixing time.
For a set $S\subset\Omega$ its \textbf{conductance} is defined as:
\[  \Phi(S) = \frac{\sum_{x\in S,y\notin S}\pi(x)P(x,y)}{\sum_{x\in S}\pi(x)}.
\]
Let $\Phi_* = \min_{S\subset\Omega:\pi(S)\leq 1/2} \Phi(S)$.
Then (see, e.g., \cite{Sinclair,LWP})
\begin{equation}
\label{eq:conductance}
 \Tmix \geq \frac{1}{4\Phi_*}.
\end{equation}

\subsection{Factor-Critical Graphs}

A graph $G = (V,E)$ is \textbf{factor-critical} if for every vertex $v \in V$,
the graph induced on $V\setminus \{v\}$ has a perfect matching. (In particular,
$|V|$ is odd.)

Factor-critical graphs are characterized by their ``ear'' structure. The
\textbf{quotient} $G/H$ of a graph $G$ by a (not necessarily induced) subgraph
$H$ is derived from $G$ by deleting all edges in $H$ and contracting all
vertices in $H$ to a single vertex $v_H$ (possibly creating loops or
multi-edges).  An \textbf{ear} of $G$ relative a subgraph $H$ of $G$ is simply a cycle in
$G/H$ containing the vertex $v_{H}$.

\begin{theorem}[Lov\'asz~\cite{Lovasz72}]
    A graph $G$ is factor-critical if and only if there is a decomposition $G =
    C_0 \cup \cdots \cup C_r$ such that $C_0$ is a single vertex, and $C_i$ is
    an odd-length ear in $G$ relative to $\bigcup_{j < i} C_j$, for all $0 <
    i \le r$.

    Furthermore, if $G$ is factor critical, there exists such a decomposition
    for every choice of vertex $C_0$, and the \emph{order} $r$ of the
    decomposition is independent of all choices.
\end{theorem}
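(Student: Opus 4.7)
The plan is to prove the biconditional in two parts, then address the ``furthermore'' clauses by a graph-invariant count and by the iterative construction itself.

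For the ``if'' direction, I induct on $r$. The base case $r = 0$ (a single vertex) is vacuously factor-critical. For the inductive step, assume $H = \bigcup_{j < r} C_j$ is factor-critical and let $C_r$ be an odd ear of length $k$ with endpoints $x, y \in V(H)$ (possibly $x = y$) and interior vertices $v_1, \ldots, v_{k-1}$ outside $V(H)$. For any $v \in V(G)$: if $v \in V(H)$, take a perfect matching of $V(H) \setminus \{v\}$ and pair the $k - 1$ (even) interior vertices consecutively along the ear; if $v = v_i$ is interior, splitting the ear at $v_i$ yields two subpaths whose lengths sum to $k$ (odd), so exactly one is even-length, and a suitable matching along the two subpaths leaves exactly one of $x$ or $y$ unmatched within the ear --- this residual vertex is then covered by a perfect matching of $V(H)$ minus that vertex, which exists by factor-criticality of $H$. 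For the invariance of $r$: each odd ear of length $k_i$ contributes $k_i - 1$ new vertices and $k_i$ new edges, so $|V(G)| = 1 + \sum_{i=1}^{r} (k_i - 1)$ and $|E(G)| = \sum_{i=1}^{r} k_i$, giving $r = |E(G)| - |V(G)| + 1$, which depends only on $G$.

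For the ``only if'' direction, I iteratively grow a factor-critical subgraph from $H_0 = \{v_0\}$ by adding odd ears, which handles the ``every starting vertex'' clause automatically (since $v_0$ is arbitrary). The key lemma is: \emph{if $G$ is factor-critical and $H \subsetneq G$ is a factor-critical subgraph, there is an odd ear in $G$ relative to $H$.} If $V(H) = V(G)$, any edge in $E(G) \setminus E(H)$ is a length-$1$ ear. Otherwise $V(H) \subsetneq V(G)$, and by connectedness of $G$ (which follows from factor-criticality via a short parity argument on the orders of connected components) there is an edge $uw$ with $u \in V(H)$ and $w \in V(G) \setminus V(H)$. I take perfect matchings $M_u$ of $G \setminus u$ and $M_w$ of $G \setminus w$ and examine the alternating path $P \subseteq M_u \triangle M_w$ from $u$ to $w$, which has even edge-length.

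The main obstacle is extracting a genuine odd ear from $P$ (or from $P$ combined with the edge $uw$), since $P$ may enter and leave $V(H)$ multiple times. My approach is to track the alternation of $M_u$- and $M_w$-edges along $P$ to locate an ``excursion'' outside $V(H)$ --- a maximal subpath $v_a, v_{a+1}, \ldots, v_b$ with $v_a, v_b \in V(H)$ and $v_{a+1}, \ldots, v_{b-1} \notin V(H)$ --- whose length is forced to be odd by the parity structure of $M_u \triangle M_w$ across the $V(H)$-boundary. If no such odd excursion appears directly, closing $P$ with the edge $uw$ creates an odd closed walk in $G/H$ whose cycle decomposition must contain an odd cycle through $v_H$, i.e., an odd ear. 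An alternative completion is to contract a suitably chosen blossom reachable from $V(H)$ by an alternating walk and induct on $|V(G)|$, lifting the resulting ear back to $G$. This parity-and-contraction step is the combinatorial heart of the proof.
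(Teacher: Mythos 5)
The paper offers no proof of this statement; it cites Lov\'asz~(1972). So there is no in-paper argument to compare against, and I will evaluate your proposal on its own terms. Your ``if'' direction and the invariance argument $r = |E(G)| - |V(G)| + 1$ are both fine (the $x=y$ case when an ear is a cycle needs a sentence, but the count works out).

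The ``only if'' direction, however, rests on a lemma that is \emph{false}: it is not true that every proper factor-critical subgraph $H$ of a factor-critical $G$ admits an odd ear relative to $H$. Take $G$ on $V(G)=\{a,b,c,x,y,z,w\}$ with $E(G)=\{ab,bc,ca,\ ax,xb,\ aw,wz,zy,yc\}$; one checks directly that $G$ is factor-critical. Let $H$ be the triangle on $\{a,b,c\}$, which is exactly the partial decomposition $C_0=\{a\}$, $C_1=$ the triangle. In $G/H$ the only cycles through $v_H$ are the $2$-cycle via $x$ and the $4$-cycle via $w,z,y$; both are even, so no odd ear relative to $H$ exists, and the greedy construction dead-ends. (Lov\'asz's theorem is unharmed: the decomposition that does work starts with the $5$-cycle $a,w,z,y,c,a$, then adds the odd ear $a,x,b,c$, then the $1$-ear $ab$.) Because the object you are trying to extract does not exist for this $(G,H)$, both of your proposed completions must fail. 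Concretely: the closed walk $P+uw$ has odd length in $G$, but passing to $G/H$ drops the edges of $P$ lying in $E(H)$, and there is no parity control on how many such edges there are --- in the example above $P$ can contain exactly one edge of the triangle, making the image walk even. And even when the image walk is odd, an odd closed walk through $v_H$ decomposes into cycles of which \emph{some} is odd, but not necessarily one through $v_H$. The actual proof must select ears non-greedily, guided throughout by a fixed near-perfect matching of $G$ (the admissible ears being alternating relative to it); that coupling between the matching and the ear choices is the content your sketch is missing.
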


Since the number of ears in the ear decomposition of a factor-critical graph
depends only on the graph, and not on the choice made in the decomposition, we
say the \textbf{order} of the factor-critical graph $G$ is the number $r$ of
ears in any ear decomposition of $G$.

Factor-critical graphs play a central role in the Gallai--Edmonds structure
theorem for graphs. We state an abridged version of the theorem below.

Given a graph $G$, let $D(G)$ be the set of vertices that remain unmatched in
at least one maximum matching of $G$. Let $A(G)$ be the set of vertices not in
$D(G)$ but adjacent to at least one vertex of $D(G)$. And let $C(G)$ denote the
remaining vertices of $G$.

\begin{theorem}[Gallai--Edmonds Structure Theorem]
    The connected components of $D(G)$ are factor-critical. Furthermore, every
    maximum matching of $G$ induces a perfect matching on $C(G)$, a near-perfect
    matching on each connected component of $D(G)$, and matches all vertices in
    $A(G)$ with vertices from distinct connected components of $D(G)$.
\end{theorem}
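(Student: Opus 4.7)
\emph{Proof strategy.} My plan is to deduce all four assertions of the
theorem from a single structural analysis. Fix an arbitrary maximum matching
$M$ of $G$, let $U$ be the set of $M$-unsaturated vertices, and call a vertex
\emph{$M$-even} if it is the endpoint of an $M$-alternating walk of even
length (possibly zero) starting in $U$; define \emph{$M$-odd} analogously
for odd length.  The key identification to establish is
$D(G) = \{v : v \text{ is } M\text{-even}\}$, independent of the choice of
$M$.  One direction is immediate: flipping matched and unmatched edges along
an even alternating walk from $u \in U$ to $v$ produces a maximum matching
exposing $v$, so $v \in D(G)$.  The converse is the standard
symmetric-difference argument: for any maximum matching $M'$ exposing $v$,
the symmetric difference $M \triangle M'$ is a disjoint union of even cycles
and paths, and the component containing $v$ must be an alternating path
from $v$ to some vertex of $U$, exhibiting $v$ as $M$-even.

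With this identification in hand, I would observe that every matched
$M$-even vertex is matched to an $M$-odd vertex, and Berge's theorem ensures
every $M$-odd vertex is matched (else the walk to it would be augmenting).
Tracing one more matching edge from an $M$-odd vertex returns an $M$-even
vertex, so the $M$-partners of the vertices of $A(G)$ all lie in $D(G)$,
establishing one half of the claim about $A(G)$.  Since the definition of
$A(G)$ ensures $C(G)$ has no neighbors in $D(G)$, and vertices of $C(G)$
must be matched in every maximum matching, the $M$-partner of a
$C(G)$-vertex lies in $C(G) \cup A(G)$; ruling out the latter (each $A(G)$
vertex already has its partner in $D(G)$) forces the perfect matching on
$C(G)$.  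For the factor-critical property of a component $K$ of $D(G)$, I
would fix $v \in K$, take a maximum matching $M_v$ exposing $v$, and apply
the $M_v$-even characterization to deduce that $v$ is the unique vertex of
$K$ not matched within $K$ by $M_v$, so $K - v$ admits a perfect matching.
Distinctness of the components to which $A(G)$ is matched then follows: two
vertices of $A(G)$ matched into the same component $K$ would, combined with
an alternating path inside the factor-critical $K$ between their partners,
yield an $M$-augmenting path and contradict Berge.

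The principal technical obstacle is the subtlety of \emph{blossoms} --- odd
alternating closed walks, which are endemic to non-bipartite graphs and
allow a single vertex to be simultaneously $M$-even and $M$-odd.  I would
handle this through Lov\'asz's blossom-contraction argument, which
guarantees that whenever an $M$-alternating \emph{walk} from $U$ to $v$
exists, an $M$-alternating \emph{path} between the same endpoints exists as
well.  The same contraction argument reveals the factor-critical nature of
a contracted blossom, which meshes pleasantly with the factor-critical
conclusion for components of $D(G)$ and provides the natural point of
contact between the combinatorial tools used in the proof and the
conclusion asserted by the theorem.
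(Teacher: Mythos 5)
The paper does not prove this statement --- it is presented as a classical theorem with references to Edmonds, Gallai, and Schrijver --- so there is no in-paper argument to compare against. Your outline follows the standard route: the even/odd (outer/inner) vertex classification relative to a fixed maximum matching, the symmetric-difference argument, and Edmonds' blossom contraction. That is the right skeleton, but two of your steps would not survive being written out.

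First, ``flipping matched and unmatched edges along an even alternating walk'' is not a valid operation: a walk may revisit vertices and edges, and the symmetric difference of $M$ with the edge set of a non-path walk need not be a matching. You correctly flag blossoms as the obstruction, but the repair is not the claim that an alternating \emph{path} ``between the same endpoints'' exists (that can fail; the path may terminate at a different exposed vertex). What you actually need is Edmonds' shrinking lemma --- if $B$ is a blossom relative to a maximum matching $M$, then $M/B$ is maximum in $G/B$, every vertex of $B$ is exposable, and $B$ is factor-critical --- applied inductively, which yields the identification $D(G)=\{M\text{-even vertices}\}$. Second, neither the uniqueness of the unmatched vertex in a component $K$ nor the distinctness of the components receiving partners of $A(G)$ follows from the even/odd classification alone, and your proposed augmenting-path argument for distinctness does not work as stated: an alternating path from $w_1$ to $w_2$ inside $K$, extended by the matched edges $a_1w_1$ and $a_2w_2$, begins and ends with \emph{matched} edges at the non-exposed vertices $a_1,a_2$, so it is not $M$-augmenting. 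The clean way to close both gaps is a Tutte--Berge count with $S=A(G)$: the deficiency equals $c(D(G))-|A(G)|$, every vertex of $A(G)$ is matched into $D(G)$, and since each odd component of $D(G)$ must send at least one vertex out or leave one exposed, the count forces exactly one per component --- which simultaneously gives the near-perfect matching on each component, the perfect matching on $C(G)$, and the distinctness clause. With these two repairs your outline becomes a complete proof.
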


\section{The Jerrum--Sinclair--Vigoda Chain}
\label{sec:JSV-fails}

We recall the definition of the original Markov chain proposed by
Broder~\cite{Broder}. The
state space of the chain is $\Omega = \mcP \cup \bigcup_{u,v} \mcN(u, v)$
where $\mcP$ is the collection of perfect matchings and
$\mcN(u,v)$ are near-perfect matchings with holes at $u$ and $v$
(i.e., vertices $u$ and $v$ are the only unmatched vertices).
The transition rule for a matching $M \in \Omega$ is as follows:
\begin{enumerate}
    \item If $M \in \mcP$, randomly choose an edge $e \in M$ and transition to
        $M\setminus \{e\}$.
    \item If $M \in \mcN(u,v)$, randomly choose a vertex $x \in V$. If $x \in
        \{u,v\}$ and $u$ is adjacent to $v$, transition to $M \cup \{(u,v)\}$. Otherwise,
        let $y \in V$ be the vertex matched with $x$ in $M$, and randomly choose $w
        \in \{u,v\}$. If $x$ is adjacent to $w$, transition to the matching $M \cup
        \{(x,w)\}\setminus\{(x,y)\}$.
\end{enumerate}

The chain $\broder$ is symmetric, so its stationary distribution is uniform. In
particular, when $|\mcP|/|\Omega|$ is at least inverse-polynomial in $n$,
we can efficiently generate uniform samples from $\mcP$ via rejection sampling,
given access to samples from the stationary distribution of $\broder$.

In order to sample perfect matchings even when $|\Omega|/|\mcP|$ is
exponentially large, Jerrum, Sinclair, and Vigoda~\cite{JSV} introduce a new chain $\jsv$
that changes the stationary distribution of $\broder$ by means of a Metropolis
filter. The new stationary distribution is uniform across \emph{hole patterns},
and then uniform within each hole pattern, i.e., for every $M \in \Omega$, the
stationary probability of $M$ is proportional to $1/|\mcN(u,v)|$ if $M \in
\mcN(u,v)$, and proportional to $1/|\mcP|$ if $M \in \mcP$.

We define $\jsv$ in greater detail. For $M \in \Omega$, define the weight
function
\begin{equation}\label{eq:std-wt}
    w(M) = \left\{\begin{matrix} \frac{1}{|\mcP|} & \textrm{if $M \in \mcP$} \\
                                \frac{1}{|\mcN(u,v)|} & \textrm{if $M \in
    \mcN(u,v)$} \end{matrix}\right.
\end{equation}
\begin{definition}\label{def:jsv-chain}
    The chain $\jsv$ has the same state space as $\broder$. The transition rule
    for a matching $M \in \Omega$ is as follows:
    \begin{enumerate}
        \item First, choose a matching $M' \in \Omega$ to which
            $M$ may transition, according to the transition rule for $\broder$
        \item With probability $\min\{1, w(M')/w(M)\}$, transition to $M'$.
            Otherwise, stay at $M$.
    \end{enumerate}
\end{definition}

In their paper, Jerrum, Sinclair, and Vigoda~\cite{JSV} in fact analyze a more general
version of the chain $\jsv$ that allows for arbitrary edge weights in the graph.
They show that the chain is rapidly mixing
for bipartite graphs $G$. (They also study the separate problem of estimating
the weight function $w$, and give a ``simulating annealing'' algorithm that
allows the weight function $w$ to be estimated by gradually adjusting edge
weights to obtain an arbitrary bipartite graph $G$ from the complete bipartite
graph.) Their analysis of the mixing time uses a canonical paths argument that
crucially relies on the bipartite structure of the graph. However, it remained
an open question whether a different analysis of the same chain $\jsv$, perhaps
using different canonical paths, might generalize to non-bipartite graphs. We
rule out this approach.

In fact, we rule out a more general family of Markov chains for sampling
perfect matchings. We say a Markov chain is ``of $\jsv$ type'' if it has
the same state space as $\jsv$, with transitions as defined in
Definition~\ref{def:jsv-chain}, for \emph{some} weight function $w(M)$ (not
necessarily the same as in Eq.~\eqref{eq:std-wt}) depending only the hole
pattern of the matching $M$.

\begin{theorem}\label{thm:jsv-fails}
    There exists a graph $G$ on $n$ vertices such that for any Markov chain
    $\mfX$ of $\jsv$ type on $G$, either the stationary probability of $\mcP$
    is at most $\exp(-\Omega(n))$, or the mixing time of $\mfX$ is at least
    $\exp(\Omega(n))$.
\end{theorem}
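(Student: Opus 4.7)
The plan is to apply the conductance bound~\eqref{eq:conductance} to a bottleneck cut in the state space, and show that no choice of hole-pattern weights $w$ can simultaneously give non-negligible conductance across the cut and non-negligible stationary mass to $\mcP$. I would construct $G$ on $n$ vertices as two ``lobes'' $L$ and $R$, each containing an internal odd-cycle gadget, joined through a narrow bridge, designed to satisfy two properties: (i) there is a partition $\mcP = \mcP_L \sqcup \mcP_R$ of roughly equal size such that every $\broder$-move between a matching ``near'' $\mcP_L$ and one ``near'' $\mcP_R$ passes through a polynomial-size family $B$ of bridge matchings, all of whose hole patterns lie in a polynomial-size set $U$; and (ii) each hole pattern $(u,v) \in U$ admits exponentially many additional matchings $M \in \mcN(u,v)$ lying entirely inside one lobe. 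Property (ii) is the crucial role of the odd cycles: in the bipartite setting of~\cite{JSV}, the weights $1/|\mcN(u,v)|$ balance each hole pattern in isolation, whereas the odd ears in our example force bridge hole patterns to replicate inside each lobe, which no hole-pattern weighting can decouple.

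Having the construction, extend the partition of $\mcP$ to a cut $\Omega = S \sqcup \bar S$ with $\mcP_L \subset S$ and $\mcP_R \subset \bar S$, chosen so that the only states of $S$ from which $\mfX$ can transition to $\bar S$ lie in $B$. Then $\Phi(S) \le \pi(B)/\pi(S)$, and writing $w_{\mcP}$ for the common perfect-matching weight, $Z$ for the partition function, $W := \sum_{(u,v) \in U} w(u,v)$, and $b$ for a polynomial upper bound on $|B \cap \mcN(u,v)|$, we get $\pi(S) \ge w_{\mcP}|\mcP_L|/Z$ and $\pi(B) \le W \cdot b / Z$, so that
\begin{equation*}
    \Phi(S) \;\le\; \frac{W \cdot b}{w_{\mcP} \cdot |\mcP_L|}.
\end{equation*}
Now the tradeoff: if $W \le w_{\mcP}|\mcP_L| \cdot \exp(-\Omega(n))$, then $\Phi(S) \le \exp(-\Omega(n))$ and~\eqref{eq:conductance} forces $\Tmix \ge \exp(\Omega(n))$; otherwise, property~(ii) yields
\begin{equation*}
    \sum_{(u,v)} w(u,v) \cdot |\mcN(u,v)| \;\ge\; W \cdot \exp(\Omega(n)) \;\ge\; w_{\mcP} \cdot |\mcP_L| \cdot \exp(\Omega(n)),
\end{equation*}
so $\pi(\mcP) = w_{\mcP}|\mcP|/Z \le \exp(-\Omega(n))$. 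Either branch yields the conclusion.

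The main obstacle is establishing property~(ii) of the construction: exhibiting a non-bipartite graph whose natural bottleneck is guarded only by hole patterns each of which is exponentially common overall. Informally, the odd cycles in $L$ and $R$ must force every bridge hole pattern $(u,v) \in U$ to be achievable by exponentially many near-perfect matchings that live entirely inside a single lobe---if this fails, a hole-pattern weighting could mimic the bipartite normalization~\eqref{eq:std-wt} and defeat the bottleneck argument. Making this obstruction quantitative, by counting the near-perfect extensions of a given hole pattern within a lobe via the ear structure of the odd-cycle gadgets and showing they dominate the corresponding bridge count, is the technical heart of the proof.
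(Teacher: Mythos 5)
Your high-level plan is the same as the paper's: identify a bottleneck cut $S$, argue that every $\broder$-transition across the cut lands on a matching whose hole pattern is ``diluted''---it has exponentially many same-pattern matchings that cannot cross---and observe that any hole-pattern weighting faces a dichotomy between small conductance and small $\pi(\mcP)$. The paper carries this out with the cut $A=S_1\cup S_3$ (matchings in which gadgets $H_1$ and $H_3$ have holes at, or are externally matched at, their $u_i,v_i$), shows $\pi(\partial\overline A)\le 2^{-k+1}\pi(\overline A)$ \emph{uniformly over all weightings}, and closes via \eqref{eq:conductance}. Your bookkeeping with $W$ is a slight reformulation of the same dichotomy. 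One small caveat: you posit a polynomial bound $b$ on $|B\cap\mcN(u,v)|$, whereas what the paper actually proves (and uses) is the weaker fractional bound $|\mcN(M')\cap\partial\overline A|\le 2^{-k+1}|\mcN(M')|$; the fractional form is both what the construction delivers and exactly the statement that a hole-pattern weighting cannot see, so you should switch to it.

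The genuine gap is the one you flag at the end: you describe the desired properties of the graph, but exhibiting a graph with property~(ii) is the technical content of the theorem and cannot be left abstract. The paper's construction goes through two steps. First, a gadget $H_k$ is built from a $12$-cycle, a chord $\{a,b\}$, and four ``chain of boxes'' paths $B_k$ (each $B_k$ has $2^k$ perfect matchings but a unique near-perfect matching with holes at its endpoints). The crucial counting facts (Lemma~\ref{lem:torpid-gadget}) are $|\mcN_{H_k}(u,v)|=1$, $|\mcN_{H_k}(x_1,v)|\ge 2^k$ for a neighbor $x_1$ of $u$, and $H_k$ has exactly two perfect matchings. Second, the final graph $G_k$ replaces every third edge of a $12$-cycle with a copy of $H_k$; a short case analysis (Lemma~\ref{lem:counterexample-matchings}) shows $G_k$ has exactly $8$ perfect matchings split evenly across the cut, and the uniqueness of the matchings in $\mcN_{H_i}(u_i,v_i)$ forces any crossing of the cut to pass through a hole pattern with $\ge 2^k$ replicas inside a gadget. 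Without a concrete construction realizing property~(ii) with these quantitative bounds, the plan is a program, not a proof.
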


The graph $G$ of Theorem~\ref{thm:jsv-fails} is constructed from several copies
of a smaller gadget $H$, which we now define.

\begin{definition}\label{def:chain-of-boxes}
    The \textbf{chain of boxes gadget} $B_k$ of length $k$ is the graph on $4k$
    vertices depicted in Figure~\ref{fig:chain-of-boxes}. To construct $B_k$,
    we start with a path $P_{2k-1} = v_0, v_1, \ldots, v_{2k}$ of length
    $2k-1$. Then, for every even edge $\{v_{2i}, v_{2i+1}\}$ on the path, we add two
    additional vertices $a_i, b_i$, along with edges to form a path $v_{2i}, a_i, b_i,
    v_{2i+1}$ of length $3$.
\end{definition}

\begin{figure}[h]
    \begin{centering}
        \includegraphics[width=0.5\textwidth]{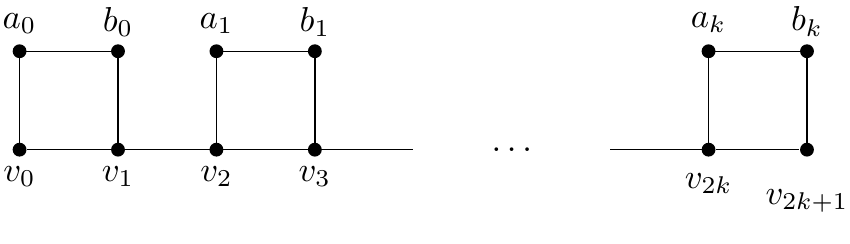}

    \end{centering}
    \caption{The ``chain of boxes'' gadget $B_k$, which has $2^k$ perfect
    matchings, but only a single matching in $\mcN(v_0, v_{2k+1})$.}
    \label{fig:chain-of-boxes}
\end{figure}

\begin{observation}\label{obs:chain-of-boxes}
    The chain of boxes gadget $B_k$ has $2^k$ perfect matchings, but only one matching
    in $\mcN(v_0, v_{2k+1})$.
\end{observation}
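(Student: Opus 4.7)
The plan is to view $B_k$ as $k$ four-cycle ``boxes'' connected by ``bridges'': box $i$ is the 4-cycle on $\{v_{2i}, a_i, b_i, v_{2i+1}\}$ (for $0 \le i \le k-1$), using the path edge $\{v_{2i},v_{2i+1}\}$ together with the three edges of the added $v_{2i}$--$a_i$--$b_i$--$v_{2i+1}$ path, and bridge $i$ is the path edge $\{v_{2i+1}, v_{2i+2}\}$ connecting boxes $i$ and $i+1$. The key local feature is that $a_i$ and $b_i$ have degree $2$, so matchings are tightly constrained, and both counts will follow by tracking how these constraints propagate along the chain.

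For $|\mcP|=2^k$, I would first show that no bridge edge appears in any $M \in \mcP$. Indeed, if the bridge $\{v_{2i+1}, v_{2i+2}\}$ were in $M$, then in box $i$ the three remaining vertices $v_{2i}, a_i, b_i$ only have the two edges $\{v_{2i},a_i\}$ and $\{a_i,b_i\}$ among them (the other box edges touch the now-removed $v_{2i+1}$), giving an odd path which admits no perfect matching -- contradiction. Hence every box is matched internally, and inside each 4-cycle there are exactly two perfect matchings ($\{v_{2i}v_{2i+1}, a_ib_i\}$ and $\{v_{2i}a_i, b_iv_{2i+1}\}$). Multiplying the independent choices over the $k$ boxes gives $2^k$.

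For $|\mcN(v_0, v_{2k+1})|=1$, I would induct on the box index. In box $0$, the hole $v_0$ removes the two edges incident to $v_0$, so the degree-$2$ vertex $a_0$ can only be matched to $b_0$; the remaining vertex $v_1$ then has no in-box neighbor available and must be matched via the bridge $\{v_1, v_2\}$. Once $v_{2i}$ is matched out by the previous bridge, the same reasoning applies inside box $i$: the vertices $a_i, b_i, v_{2i+1}$ have only the edges $\{a_i,b_i\}$ and $\{b_i,v_{2i+1}\}$ remaining, forcing $\{a_i, b_i\} \in M$ and then $v_{2i+1}$ into the next bridge. The induction terminates at the last box, where the second hole absorbs the final leftover endpoint and $\{a_{k-1}, b_{k-1}\}$ is the last forced edge. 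The matching is uniquely determined.

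I do not expect a real obstacle here: the whole argument is forced propagation driven by the degree-$2$ vertices $a_i, b_i$, with the two holes serving as boundary conditions in the near-perfect case. The only care needed is index bookkeeping at the two ends of the chain, to confirm that the forced structure is consistent and not overdetermined at the terminal box.
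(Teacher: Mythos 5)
Your overall strategy (decouple the $k$ boxes, count two perfect matchings per box, and then show forced propagation for the near-perfect case) is the right one, and the paper itself states this as an unproved observation, so there is no official proof to compare against. The forced-propagation argument for $|\mcN(v_0,\cdot)|=1$ is correct: the hole at $v_0$ forces $\{a_0,b_0\}$, which forces the first bridge, and each box thereafter propagates a hole-free, uniquely determined matching until the far endpoint absorbs the second hole.

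However, the step establishing that no bridge edge occurs in a perfect matching has a gap as written. You argue that if $\{v_{2i+1},v_{2i+2}\}\in M$, then $\{v_{2i},a_i,b_i\}$ is an odd path that cannot be perfectly matched, contradiction. But this implicitly assumes $M$ induces a perfect matching on that triple, i.e.\ that $v_{2i}$ is not matched across the \emph{other} bridge $\{v_{2i-1},v_{2i}\}$. For an interior $i$, nothing you have said rules this out: a hypothetical $M$ could use both neighboring bridges and cover box $i$ with $\{a_i,b_i\}$ alone, with no local contradiction at box $i$. The repair is easy: either apply your triple argument only to the \emph{leftmost} bridge in $M$ (then $v_{2i}$ has no available out-of-box partner, since bridge $i-1$ is unused, and the odd-path contradiction is valid), or, more cleanly, note that each bridge is a cut edge, so using $\{v_{2i+1},v_{2i+2}\}$ would force the $4(i+1)-1$ vertices on the $v_0$ side other than $v_{2i+1}$ to be perfectly matched among themselves, which is impossible by parity. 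With either patch the count $2^k$ follows as you say. (You also inherit the paper's off-by-one in labeling the far endpoint $v_{2k+1}$ rather than $v_{2k-1}$; this is a typo in the source and does not affect the substance.)
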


\begin{definition}\label{def:torpid-gadget}
    The \textbf{torpid mixing gadget} $H_k$ is the graph depicted in
    Figure~\ref{fig:gadget}. To construct $H$, first take a $C_{12}$ and label
    two antipodal vertices as $a$ and $b$. Add an edge between $a$ and $b$, and
    label the two vertices farthest from $a$ and $b$ as $u$ and $v$. Label the
    neighbor of $u$ closest to $a$ as $w_1$, and the other neighbor of $u$ as
    $w_2$. Label the neighbor of $v$ closest to $a$ as $z_1$ and the other
    neighbor of $v$ as $z_2$. Finally, add four chain-of-boxes gadgets $B_k$,
    identifying the vertices $v_0$ and $v_{2k}$ of the gadgets with $w_1$ and
    $a$, with $a$ and $z_1$, with $w_2$ and $b$, and with $b$ and $z_2$,
    respectively.
\end{definition}

Note that in Figures~\ref{fig:gadget} and~\ref{fig:gadget-alt}, one ``box'' from each
copy of $B_k$ in the torpid mixing gadget is left undrawn, for visual clarity.

\begin{figure}[h]
    \begin{centering}
    \includegraphics[width=0.7\textwidth]{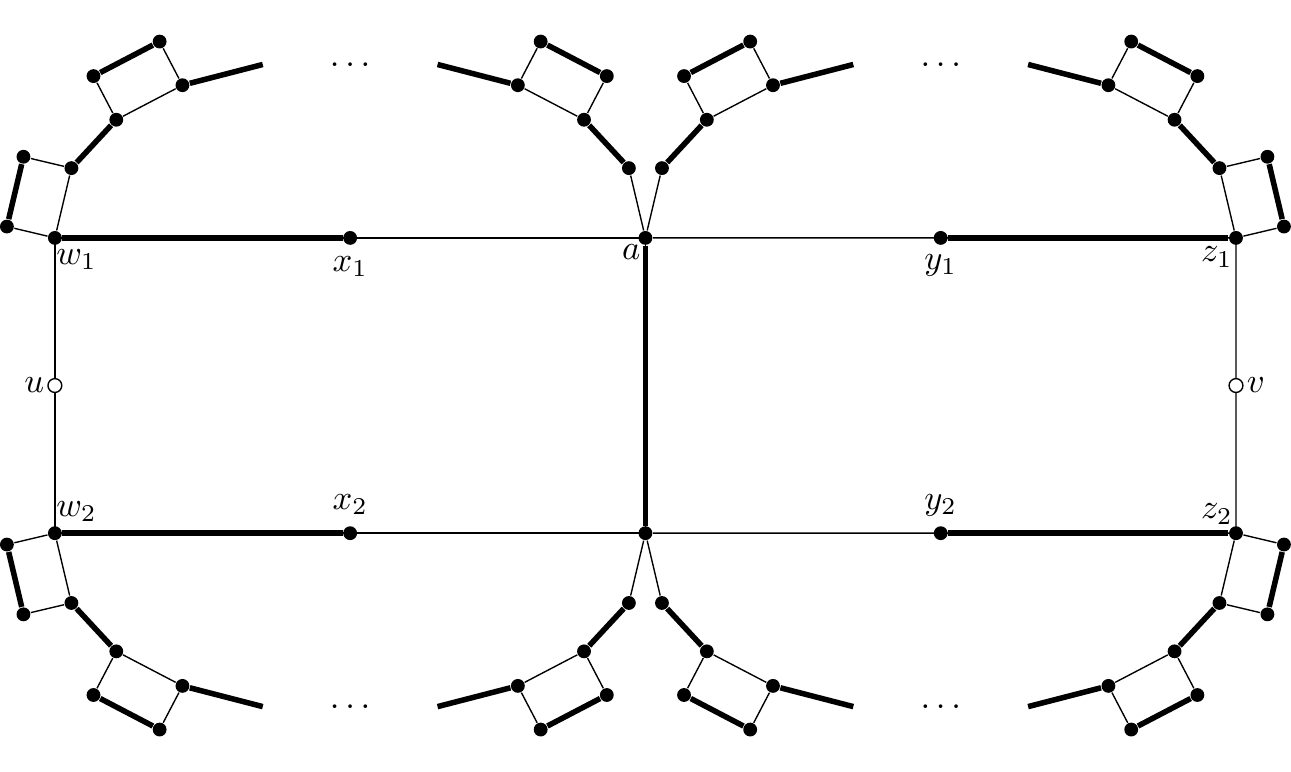}

    \end{centering}
    \caption{The torpid mixing gadget $H_k$. The unique matching $M
    \in \mcN(u,v)$ is depicted with thick edges.}
    \label{fig:gadget}
\end{figure}

\begin{figure}[h]
    \begin{centering}
    \includegraphics[width=0.7\textwidth]{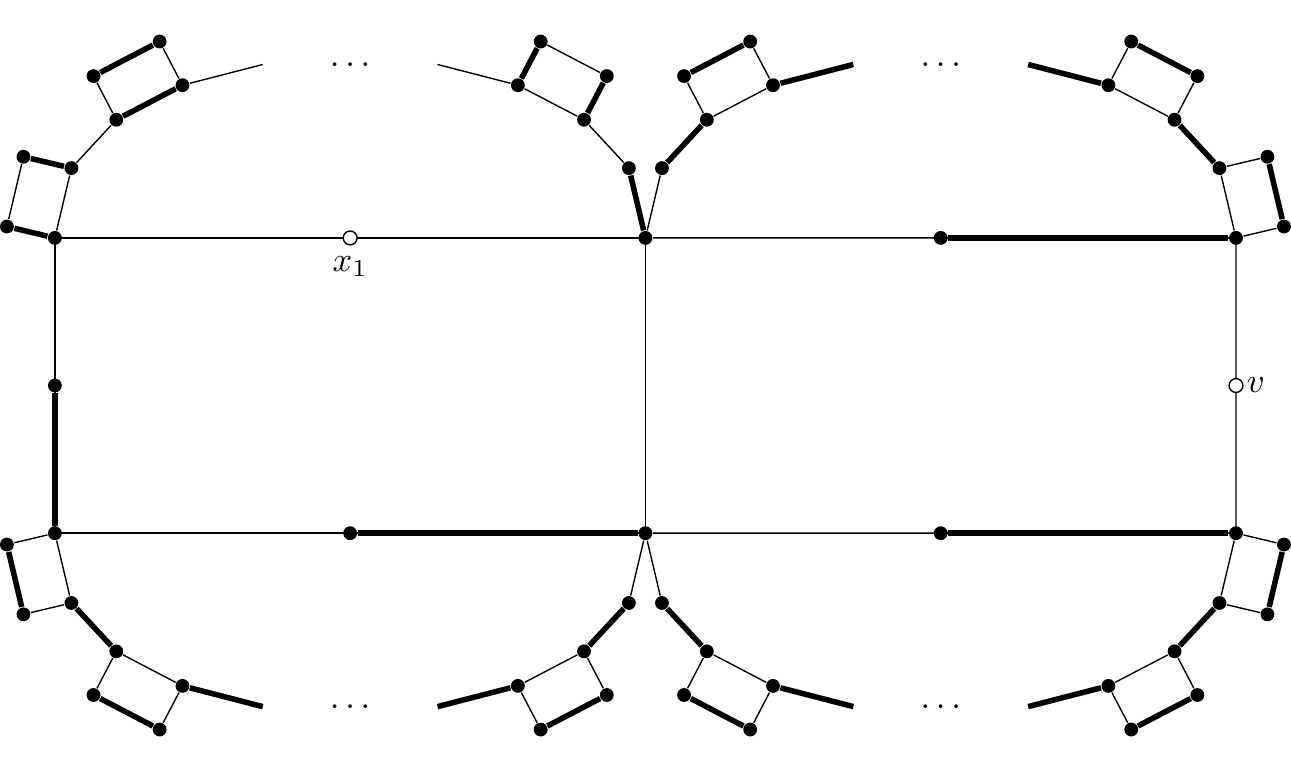}

    \end{centering}
    \caption{A matching $M' \in \mcN(x_1, v)$.  There are exponentially many
    matchings with the same hole pattern, obtained by alternating the
    $4$-cycles above $x_1$.}
    \label{fig:gadget-alt}
\end{figure}

\begin{lemma}\label{lem:torpid-gadget}
    The torpid mixing gadget $H = H_k$ has $16k + 4$ vertices and exactly $2$
    perfect matchings. Furthermore, $|\mcN_H(u,v)| = 1$ and $\mcN_H(x_1, v) \ge
    2^k$.
\end{lemma}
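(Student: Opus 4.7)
The plan is to exploit the fact that each chain-of-boxes $B_k$ acts as a rigid bimodal gadget inside any matching of the larger graph. The vertex count is immediate: $C_{12}$ contributes $12$ vertices, and each of the four copies of $B_k$ contributes $4k-2$ new vertices after identifying two of its endpoints with existing $C_{12}$ vertices, giving the total $12+4(4k-2)=16k+4$.

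The central structural claim I will prove first is that, when $B_k$ is embedded in a larger graph $G$ with its endpoints identified at vertices $p$ and $q$, any perfect matching of $G$ restricts to $B_k$ in one of two ways. Either (II) both $p$ and $q$ are matched within $B_k$, and the restriction is one of the $2^k$ perfect matchings of $B_k$ from Observation~\ref{obs:chain-of-boxes}; or (EE) both $p$ and $q$ are matched externally, and the restriction to $B_k\setminus\{p,q\}$ is forced to be the unique near-perfect matching comprising all ``box'' edges $a_ib_i$ and all ``odd'' path edges $v_{2i+1}v_{2i+2}$. The proof is a propagation along the chain: since $a_i,b_i$ have no neighbors outside their own box, assuming $p$ is external forces $a_0b_0$, then $v_1v_2$, then $a_1b_1$, and so on, eventually forcing $q$ external as well; the other direction follows by symmetry of the chain.

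With this dichotomy in hand, counting perfect matchings of $H_k$ reduces to a short case analysis on how $u$ and $v$ are matched, since $u$'s only neighbors are $\{w_1,w_2\}$ and $v$'s only neighbors are $\{z_1,z_2\}$. In the two ``aligned'' cases (in which the partners of $u$ and $v$ lie on the same side of $C_{12}$), the degree-two $C_{12}$ vertex $x_1$ lying between $a$ and $w_1$ (or one of its three symmetric analogues near $a$ or $b$) is orphaned by the forced matchings, so these cases contribute zero. In the two ``crossed'' cases, chasing forced edges --- while using the obvious ban on simultaneous II modes at the vertices $a$ and $b$, each of which is shared between two chains --- forces all four chains into EE and determines every remaining $C_{12}$ edge, yielding exactly one matching per case and thus $|\mcP|=2$.

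The same machinery handles the near-perfect matching counts. For $\mcN_H(u,v)$, holes at $u$ and $v$ leave $x_1$ and its three analogues each with only one remaining neighbor ($a$ or $b$); these pairings are then forced, which in turn forces every chain into EE (any II would orphan one such analogue) and uniquely pins down all remaining edges, including the edge $ab$, giving one matching. For $\mcN_H(x_1,v)$, the hole at $x_1$ instead frees up $a$ and $w_1$, allowing the $B_k$ joining $w_1$ to $a$ to enter II mode; propagation then forces $u$ paired with $w_2$, the remaining three chains into EE, and every other $C_{12}$ edge, so each of the $2^k$ internal configurations of the freed chain yields a distinct matching, giving at least $2^k$. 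The main bookkeeping hurdle is that $a$ and $b$ are each shared between two chains, so one must carefully rule out illegal simultaneous II modes at these vertices --- but the bimodal abstraction makes this routine.
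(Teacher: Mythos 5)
Your II/EE dichotomy for an embedded chain of boxes is a clean abstraction, essentially equivalent to the direct forcing argument in the paper, and your vertex count, $|\mcP_H|=2$, and $|\mcN_H(x_1,v)|\ge 2^k$ arguments all go through. However, the $|\mcN_H(u,v)|=1$ case, as written, does not: you claim that holes at $u$ and $v$ leave each of $x_1,y_1,x_2,y_2$ with only one remaining neighbor, namely $a$ or $b$, and that these pairings are forced. That is false. Holes at $u$ and $v$ do not delete any neighbor of $x_1$ (whose two neighbors are $w_1$ and $a$, neither equal to $u$ or $v$). Worse, if you did try to enforce $x_1a$, $y_1a$, $x_2b$, $y_2b$ \emph{and} $ab$ simultaneously, the vertex $a$ would be matched three times over, so this line of reasoning is internally inconsistent and cannot produce the claimed unique matching.

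The correct forcing runs in the opposite direction, and your own dichotomy delivers it. Since $u$ is a hole, $w_1$ must be matched to $x_1$ or internally to the chain between $w_1$ and $a$; the latter would put that chain in II mode, internalize $a$, and orphan $x_1$, so $w_1x_1$ is forced. Symmetrically $z_1y_1$, $w_2x_2$, and $z_2y_2$ are forced, which puts all four chains into EE mode, and then $a$ and $b$ have only each other as available partners, forcing $ab$. This is precisely the matching $M$ of Figure~\ref{fig:gadget}: the pairings you asserted are exactly the ones that are ruled out, not the ones that occur. The slip is local and your overall strategy stands, but the step as stated would fail if carried out.
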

\begin{proof}
    A matching $M \in \mcN_H(u,v)$ is depicted in Figure~\ref{fig:gadget}. We
    argue that $M$ is the only matching in $\mcN(u,v)$. First note that $x_1$
    must be matched with either $w_1$ or $a$. Either choice forces the matching
    on the ``chain of boxes'' above $x_1$ remain identical to $M$. But then if
    $x_1$ is matched with $a$, there are no vertices to which $w_1$ can be
    matched. So $x_1$ must be matched with $w_1$, and the choice of edge for
    $x_2$, $y_1$, and $y_2$ is forced symmetrically, giving the matching $M$.

    Similarly, there are exactly two perfect matchings of $H$. Vertex $u$ is
    matched with either $w_1$ or $w_2$, and either choice determines all other
    edges. In particular, if $u$ is matched with $w_1$, then $x_1$ must be
    matched with $a$, and $y_1$ with $z_1$, and so on along the entire
    $12$-cycle containing $u$ and $v$. The edges on the four ``chains of
    boxes'' are then also completely determined. The other case, when $u$ is
    matched with $w_2$, is symmetric.

    We now argue that $|\mcN_H(x_1, v)| \ge 2^k$.
    Starting from the matching $M' \in \mcN_H(x_1, v)$ depicted in
    Figure~\ref{fig:gadget-alt}, each of the $k$ copies of $C_4$ in the chain
    of boxes above $x_1$ can be independently alternated, giving $2^k$ distinct
    matchings in $\mcN_H(x_1, v)$. \qed
\end{proof}

The torpid mixing gadget already suffices on its own to show that the Markov
chain $\mfX_{\jsv}$ defined in~\cite{JSV} is torpidly mixing. In
particular, the conductance out of the set $\mcN_H(x_1, v) \subseteq \Omega(H)$
is $2^{-\Omega(k)}$. In order to prove the stronger claim of
Theorem~\ref{thm:jsv-fails}, that every Markov chain of $\jsv$-type fails
to efficiently sample perfect matchings, we construct a slightly larger graph
from copies of the torpid mixing gadgets.

\begin{definition}\label{def:counterexample}
    The \textbf{counterexample graph} $G_k$ is the graph depicted in
    Figure~\ref{fig:counterexample}.  It is defined
    by replacing every third edge of the twelve-cycle $C_{12}$ with the gadget
    $H_k$ defined in Figure~\ref{fig:gadget}. Specifically, let $\{u_i, v_i\}$
    be the $3i$-th edge of $C_{12}$ for $i \in \{1, \ldots, 4\}$.
    We delete each edge $\{u_i, v_i\}$ and replace it
    with a copy of $H$, identifying the vertices $u$ and $v$ of $H$ with
    vertices $u_i$ and $v_i$ of $C_{12}$. The resulting graph is $G_k$. Thus, of
    the $12$ original vertices in $C_{12}$, $8$ of the corresponding vertices
    in $G_k$ participate in a copy of the gadget $H$, and $4$ do not. These $4$
    vertices of $G_k$ which do not participate in any copy of the gadget $H$ are
    labeled $t_1, \ldots, t_4$ in cyclic order, and the copies of the gadget
    $H$ are labeled $H_1, \ldots H_4$ in cyclic order, with $H_1$ coming
    between $t_1$ and $t_2$, and so on. Thus, $t_1$ is adjacent to $u_1$ and
    $v_4$, $t_i$ is adjacent to $u_i$ and $v_{i-1}$ for $i \in \{2,\ldots,
    4\}$, and $H_i$ contains both $u_i$ and $v_i$.
\end{definition}

\begin{figure}[h!]
    \begin{centering}
    \includegraphics{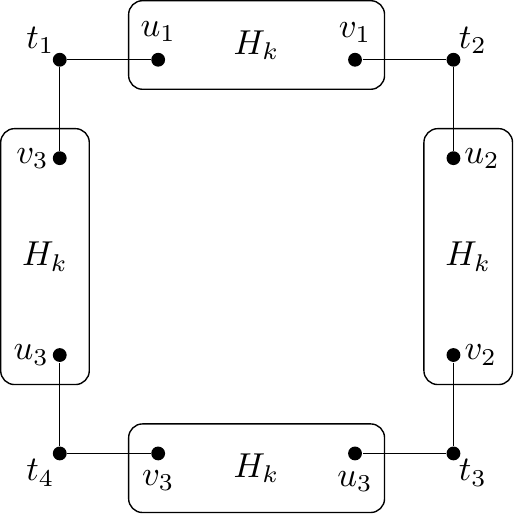}

    \end{centering}
    \caption{The ``counterexample graph'' $G_k$ on which $\jsv$ is torpidly
    mixing. The boxes labeled $H_k$ represent copies of the torpid mixing
    gadget of Definition~\ref{def:torpid-gadget}.}
    \label{fig:counterexample}
\end{figure}

In particular, $G_k$ has $4|V(H)| + 4 = 64k + 8$ vertices.

The perfect and near-perfect matchings of $G_k$ are naturally divided into four
intersecting families. For $i \in \{1,\ldots, 4\}$ we define $S_i$
to be the collection of (perfect and near-perfect) matchings $M \in
\Omega(G_k)$ such that the restriction of $M$ to $H_i$ has two holes, at $u_i$ and
$v_i$, i.e., such that the vertices $u_i$ and $v_i$ either have holes in
$M$ or are matched outside of $H_i$.

\begin{lemma}\label{lem:counterexample-matchings}
    The counterexample graph $G_k$ has exactly $8$ perfect matchings. Of these,
    $4$ are in $S_1 \cap S_3 \setminus (S_2 \cup S_4)$ and $4$ are in $S_2 \cap
    S_4 \setminus (S_1 \cup S_3)$.
\end{lemma}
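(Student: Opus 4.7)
The plan is to classify perfect matchings of $G_k$ by the local behavior at each gadget $H_i$. The only vertices of $H_i$ having neighbors outside $H_i$ in $G_k$ are $u_i$ and $v_i$, each with a single external neighbor among the $t_j$'s. A parity argument (using $|V(H_i)| = 16k+4$, which is even) shows that in any perfect matching $M$ of $G_k$, either both of $u_i, v_i$ are matched inside $H_i$ or both are matched along the $C_{12}$ to their $t$-neighbors; the intermediate case would leave $16k+3$ vertices inside $H_i$ to be internally matched, which is impossible. Call $H_i$ \emph{closed} in the first case and \emph{open} in the second, so $M \in S_i$ iff $H_i$ is open in $M$. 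By Lemma~\ref{lem:torpid-gadget}, a closed gadget contributes one of the $2$ perfect matchings of $H$, while an open gadget contributes the unique matching in $\mcN_H(u,v)$.

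The main step is to determine which global open/closed patterns are realizable. I would argue that the pattern must strictly alternate around the cyclic sequence $H_1, H_2, H_3, H_4$. Each $t_j$ has exactly two neighbors, $v_{j-1}$ and $u_j$, and must be matched to one of them. If $H_i$ is open, then $u_i$ and $v_i$ must be matched externally, forcing $t_i \leftrightarrow u_i$ and $t_{i+1} \leftrightarrow v_i$; this uses up $t_i$ and $t_{i+1}$, so $v_{i-1}$ and $u_{i+1}$ no longer have an external partner and must be matched internally, whence $H_{i-1}$ and $H_{i+1}$ are closed. Conversely, if $H_i$ is closed, then $u_i$ and $v_i$ are internally matched, so $t_i$ and $t_{i+1}$ must pair with their other neighbors $v_{i-1}$ and $u_{i+1}$, which requires those vertices to be free, forcing $H_{i-1}$ and $H_{i+1}$ open. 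On a $4$-cycle of gadgets, strict alternation leaves exactly two allowed patterns: $\{H_1, H_3\}$ open with $\{H_2, H_4\}$ closed, or the reverse.

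To finish, I multiply the local contributions for each pattern. In either case, the two open gadgets contribute $1$ matching each, the two closed gadgets contribute $2$ each, and the edges incident to the $t_j$'s are forced, giving $1 \cdot 1 \cdot 2 \cdot 2 = 4$ perfect matchings per pattern and $8$ in total. The first pattern produces matchings in $S_1 \cap S_3 \setminus (S_2 \cup S_4)$ and the second in $S_2 \cap S_4 \setminus (S_1 \cup S_3)$, as claimed. The only delicate point is the forcing argument in the middle paragraph, which is essentially a bookkeeping exercise about which of the two $t$-options remains available at each boundary vertex; no substantive obstacle arises beyond the initial parity observation.
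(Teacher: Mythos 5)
Your proof is correct and takes essentially the same approach as the paper: both rely on the parity observation that each gadget $H_i$ must be wholly open or wholly closed, then propagate the forced matchings of the $t_j$'s around the cycle to obtain exactly two alternation patterns contributing $1\cdot1\cdot2\cdot2 = 4$ perfect matchings each. Your presentation packages this slightly more systematically via the open/closed dichotomy and the $2$-coloring of the $C_4$ of gadgets, but the underlying argument is the same.
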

\begin{proof}
    The graph $G_k$ has exactly $8$ perfect matchings. To obtain a matching in
    $S_1 \cap S_3 \setminus (S_2 \cup S_4)$, we may without loss of generality
    start by matching the vertices in $H_1$ and $H_3$ according to a matching
    in $\mcN_{H_1}(u_1, v_1)$ or $\mcN_{H_3}(u_3,
    v_3)$, respectively. We must then match $t_1$ with $u_1$, $t_2$ with $v_1$,
    $t_3$ with $u_3$, and $t_4$ with $v_3$. Finally, we must match the
    remaining vertices according to a perfect matching on each of $H_2$ and
    $H_4$. By Lemma~\ref{lem:torpid-gadget}, there are two perfect matchings on each
    of $H_2$ and $H_4$, and a unique matching in each of $\mcN_{H_1}(u_1, v_1)$
    and $\mcN_{H_3}(u_3, v_3)$, so indeed there are four matchings in $(S_1
    \cap S_3) \setminus (S_2 \cup S_4)$. Similarly, there are exactly four
    matchings in $(S_2 \cap S_4) \setminus (S_1 \cup S_3)$.

    To see that there are no other perfect matchings, let $M$ be an arbitrary
    perfect matching of $G_k$. Then $t_1$ is matched either with $u_1$ or
    $v_4$. Suppose $t_1$ is matched with $u_1$. Then $v_4$ is matched within
    $H_4$. Since $H_4$ has an even number of vertices, $u_4$ must also be
    matched within $H_4$, and hence $M$ induces a perfect matching on $H_4$.
    Continuing in a similar fashion, $M$ must also induce a perfect matching on
    $H_2$. Then the restriction of $M$ to $H_1$ or $H_3$ has holes at $u_1$ and
    $v_1$, and at $u_3$ and $v_3$, respectively, so $M \in S_1 \cap S_3
    \setminus (S_2 \cup S_4)$. Symmetrically, if $t_1$ is matched with $v_4$
    then $M \in S_2 \cap S_4 \setminus (S_1 \cup S_3)$. \qed
\end{proof}

In the proof below, we use the notation $\mcN(M)$ denote the collection of
matchings with the hole pattern as $M$. That is, $\mcN(M) = \mcP$ if $M \in
\mcP$, and $\mcN(M) = \mcN(u,v)$ if $M \in \mcN(u,v)$.

\begin{proof}[Proof of Theorem~\ref{thm:jsv-fails}]
    Let $G_k$ be the counterexample graph of
    Definition~\ref{def:counterexample}.  We will show that the set $S_1 \cup
    S_3 \subseteq \Omega(G_k)$ has poor conductance, unless the stationary
    probability of $\mcP_{G_k}$ is small.  We will write $A = S_1 \cup S_3$ and
    $\overline{A} = \Omega(G_k)\setminus (S_1\cup S_3)$.

    Let $M \in A$ and $M' \in \overline{A}$ be such that $P(M, M') > 0$. We
    claim that neither $M$ nor $M'$ are perfect matchings.  Assume without loss
    of generality that $M \in S_1$.  If $M \in S_1$ is a perfect matching, then
    $M \in P_2$ and so $M \in S_3$. The only legal transitions from $M$ to
    $\Omega \setminus S_1$ are those that introduce additional holes within
    $H_1$, but none of these transitions to a matching outside of $S_3$. Hence,
    $M$ cannot be perfect.  But if $M'$ is perfect, then $M' \in P_1$, and so
    $M'$ induces a perfect matching on $S_1$. But then the transition from $M$
    to $M'$ must simultaneously affect $u_1$ and $v_1$, and no such transition
    exists.

    We denote by $\partial \overline{A}$ the set of matchings $M' \in
    \overline{A}$ such that there exists a matching $M \in A$ with $P(M, M') >
    0$. We claim that for every matching $M' \in \overline{A}$, we have
    \begin{equation}\label{eq:small-boundary}
        |\mcN(M') \cap \partial \overline{A}| \le 2^{k-1} |\mcN(M')|\,.
    \end{equation}
    Let $M' \in \partial\overline{A}$, and let $M \in A$ be such that
    $P(M, M') > 0$.  Suppose first that $M \in
    S_1$.  Label the vertices of $H_1$ as in Figure~\ref{fig:gadget},
    identifying $u_1$ with $u$ and $v_1$ with $v$.  Let $N$ be the matching on
    $H = H_1$ induced by $M$, and let $N'$ be the matching on $H_1$ induced by
    $M'$. We have $N \in \mcN_{H}(u_1, v_1)$.  But by
    Lemma~\ref{lem:torpid-gadget}, we have $|\mcN_H(u_1, v_1)| = 1$, i.e., $N$
    is exactly the matching depicted in Figure~\ref{fig:gadget}. The only
    transitions that remove the hole at $u$ are the two that shift the hole to
    $x_1$ or $x_2$, and the only transitions that remove the hole at $v$ are
    the two that shift the hole to $y_1$ or $y_2$.  So, without loss of
    generality, by the symmetry of $G_k$, we have $N' \in \mcN_H(x_1, v_1)$.
    By Lemma~\ref{lem:torpid-gadget}, $|\mcN_H(x_1, v_1)| \ge 2^k$, but only
    one matching in $\mcN_H(x_1, v_1)$ has a legal transition to $N$.
    Therefore, if we replace the restriction of $M'$ to
    $H_1$ with any other matching in $\mcN_H(x_1, v_1)$, we obtain another
    matching $M'' \in \mcN(M')$, but $M''$ has no legal transition to any
    matching in $\mcN(M)$. Hence, only a $2^{-k}$-fraction of $\mcN(M')$ has a
    legal transition to $S_1$, and similarly only a $2^{-k}$-fraction of
    $\mcN(M')$ has a legal transition to $S_3$. In particular, we have proved
    Eq.~\eqref{eq:small-boundary}.

    From Eq.~\eqref{eq:small-boundary}, it immediately follows that the
    stationary probability of $\partial\overline{A}$ is
    \begin{equation}\label{eq:small-boundary2}
        \pi(\partial\overline{A}) = \sum_{M' \in \partial\overline{A}} \pi(M')
        = \sum_{M' \in \overline{A}} \pi(M')\frac{|\mcN(M')\cap
        \partial\overline{A}|}{|\mcN(M')|} = 2^{-k+1}\pi(\overline{A})
    \end{equation}

    We now compute
    \begin{align*}
        \sum_{\substack{M \in A,  M' \in
        \overline{A} \\ P(M, M') > 0}}\pi(M)P(M, M')
        = \sum_{\substack{M \in A, M' \in
        \overline{A} \\ P(M, M') > 0}}\pi(M')P(M', M)
        &\le \pi(\partial(\overline{A})) \\ &< 2^{-k+1}\pi(\overline{A}),
    \end{align*}
    where we first use the detailed balance condition and then
    Eq.~\eqref{eq:small-boundary2}.

    Now by \eqref{eq:conductance} and the
    definition of conductance, we have
    \[
        \frac{1}{4 \tau_{\mfX}} < \Phi(A)
        < 2^{-k}\frac{\pi(\overline{A})}{\pi(A)}\,.
    \]
    In particular, if $\tau_{\mfX} < 2^{k/2 - 2}$, then $\pi(\overline{A}) >
    2^{k/2+1}\pi(A)$. Suppose this is the case. By
    Lemma~\ref{lem:counterexample-matchings}, half of the perfect matchings of
    $G_k$ belong to $A$. In particular, $\pi(\mcP_{G_k}) \le 2\pi(A) <
    2^{-k/2+2}$. Hence, either the stationary probability of $\mcP$ is at most
    $2^{-k/2+2} = \exp(-\Omega(n))$, or the mixing time of $\mfX$ is at least
    $2^{k/2-2} = \exp(\Omega(n))$. \qed
\end{proof}

We remark that the earlier Markov chain studied by Broder~\cite{Broder} and Jerrum and
Sinclair~\cite{JS} is also torpidly mixing on the counterexample graph of
Definition~\ref{def:counterexample}, since the ratio of near-perfect matchings to
perfect matchings is exponential~\cite{JS}.

\section{Chains Based on Edmonds' Algorithm}
\label{sec:blossoms-fail}

Given that Edmonds' classical algorithm for \emph{finding} a perfect matching
in a bipartite graph requires the careful consideration of odd cycles in the
graph, it is reasonable to ask whether a Markov chain for counting perfect
matchings should also somehow track odd cycles. In this section, we briefly
outline some of the difficulties of such an approach.

A \emph{blossom} of length $k$ in a graph $G$ equipped with a matching $M$ is
simply an odd cycle of length $2k + 1$ in which $k$ of the edges belong to $M$.
Edmonds' algorithm finds augmenting paths in a graph by exploring the
alternating tree rooted at an unmatched vertex, and contracting blossoms to a
vertex as they are encountered. Given a blossom $B$ containing an unmatched
vertex $u$, there is an alternating path of even length to every vertex $v \in
B$. \emph{Rotating} $B$ to $v$ means shifting the hole at $u$ to $v$ by
alternating the $u$-$v$ path in $B$.

Adding rotation moves to a Markov chain in the style of $\jsv$ is an attractive
possible solution to the obstacles presented in the previous section. Indeed,
if it were possible to rotate the $7$-cycle containing $u$ and $a$ in the graph in
Figure~\ref{fig:gadget}, it might be possible to completely avoid problematic
holes at $x_1$ or $x_2$.

The difficulty in introducing such an additional move the Markov chain $\jsv$
is in defining the set of \emph{feasible} blossoms that may be rotated, along with
a probability distribution over such blossoms. In order to be useful, we must
be able to \emph{efficiently sample} from the feasible blossoms at a given
near-perfect matching $M$. Furthermore, the feasible blossoms must respect time
reversibility: if $B$ is feasible when the hole is at $u \in B$, then it must
also be feasible after rotating the hole to $v \in B$; reversibility of the Markov chain is needed
so that we understand its stationary distribution.  Finally, the feasible
blossoms must be rich enough to avoid the obstacles outlined in the previous
section.

The set of ``minimum length'' blossoms at a given hole vertex $u$ satisfies the
first criterion of having an efficient sampling algorithm. But it is easy to
see that if only minimum length blossoms are feasible, then the obstacles
outlined in the previous section will still apply (simply by adding a $3$-cycle
at every vertex). Moreover, families blossoms characterized by minimality may
struggle to satisfy the second criterion of time-reversibility. In
Figure~\ref{fig:blossom}, there is a unique blossom containing $u$, but after
rotating the hole to $v$, it is no longer minimal.

\begin{figure}
    \begin{centering}
    \includegraphics{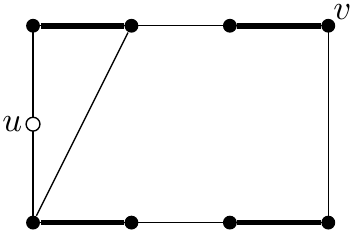}

    \end{centering}
    \caption{After rotating the blossom so that the hole is moved from $u$ to
    $v$, the blossom is no longer ``minimal''.}
    \label{fig:blossom}
\end{figure}

On the other hand, the necessity of having an efficient sampling algorithm for
the feasible blossoms already rules out the simplest possibility, namely, the
uniform distribution over \emph{all} blossoms containing a given hole vertex
$u$.  Indeed, if we could efficiently sample from the uniform distribution over
all blossoms containing a given vertex $u$, then by an entropy argument we
could find arbitrarily large odd cycles in the graph, which is NP-hard.

\begin{theorem}
\label{thm:all-blossoms-hard}
Let {\sc Sampling Blossoms} problem be defined as follows. The
input is an undirected graph $G$ and a near-perfect matching $M$
with holes at $w,r\in V(G)$. The output is a uniform sample from
the uniform distribution of blossoms containing $w$. Unless NP=RP
there is no randomized polynomial-time sampler for {\sc Sampling
Blossoms}.
\end{theorem}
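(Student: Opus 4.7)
The plan is to reduce from an NP-hard problem to \textsc{Sampling Blossoms}, so that a polynomial-time randomized uniform sampler would imply NP $=$ RP. A convenient source problem is Hamiltonicity of a graph $G_0$ on an odd number of vertices through a designated starting vertex $v_0$; this is NP-hard, and any Hamiltonian cycle through $v_0$ in such a $G_0$ is automatically an odd cycle through $v_0$ of length $n$.

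Given such an instance $G_0$, I would build a graph $G$ together with a near-perfect matching $M$ and holes $w, r$ by gadget replacement. The core primitives are an \emph{edge gadget} that replaces each $\{u,v\}\in E(G_0)$ with a short alternating path $u - a_{uv} - b_{uv} - v$ and puts $\{a_{uv}, b_{uv}\}\in M$, so that an alternating traversal of the gadget corresponds exactly to using the original edge; and a \emph{vertex gadget} that incorporates each $v \in V(G_0)$ into $M$ (so that all but $w$ and $r$ are matched) while still permitting alternating paths to enter and leave $v$ on distinct incident edges. Under such a construction, blossoms containing $w$ in $(G,M)$ correspond in a controlled way to odd cycles through $v_0$ in $G_0$, with longer cycles producing longer blossoms.

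For the ``entropy'' step I would invoke a standard Jerrum--Valiant--Vazirani self-reducibility argument: a polynomial-time uniform sampler for blossoms, together with conditioning on the first edge of the blossom and recursing on the resulting alternating-path counting problem, yields an FPRAS for the number of blossoms in $(G,M)$. Equipped with this approximate counter, I would then detect a Hamiltonian-length cycle in $G_0$ either by comparing the blossom counts in $(G,M)$ and in subgraphs obtained by deleting carefully chosen edges (so that a Hamiltonian cycle produces a multiplicative gap of detectable size), or by amplifying the gadget, e.g., replacing each edge of $G_0$ by $t$ parallel copies so that a cycle of length $\ell$ contributes $t^\ell$ blossoms and thus exponentially dominates shorter cycles after a modest padding. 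Either route places Hamiltonicity in RP.

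The main obstacle I expect is the combinatorial correctness of the gadgets together with the quantitative dominance of Hamiltonian cycles over shorter odd cycles: the vertex gadget must be engineered to avoid spurious blossoms confined to a single gadget, the matching $M$ must leave exactly $w$ and $r$ uncovered, and the amplification (or subgraph-comparison scheme) must be strong enough to overcome the possibly exponentially many shorter odd cycles in $G_0$. Once these technicalities are handled, the NP-hardness of Hamiltonicity combined with the FPRAS derived from uniform sampling yields Theorem~\ref{thm:all-blossoms-hard}.
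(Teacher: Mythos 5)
Your overall strategy---reduce from an NP-hard longest-cycle/path problem, encode cycles/paths as blossoms via gadgets, and amplify so that long cycles dominate---is the same one the paper uses, but the paper's execution is cleaner and sidesteps the two places where your sketch is weakest. First, the paper reduces from \emph{longest directed $s$-$t$ path} and uses the standard vertex-split encoding: each $v \in V(H)$ becomes $v_0, v_1$ with $\{v_0,v_1\}\in M$, and each arc $(u,v)$ becomes the unmatched edge $\{u_1,v_0\}$; this yields a clean bijection between blossoms through $w$ and $s$-$t$ paths with no spurious local blossoms. In your sketch, the edge gadget $u - a_{uv} - b_{uv} - v$ with $\{a_{uv},b_{uv}\}\in M$ is essentially a no-op (it preserves parity exactly as a single unmatched edge $\{u,v\}$ would), so all the real work is pushed into the unspecified vertex gadget, which is precisely where alternation at a matched internal vertex breaks down---you acknowledge this obstacle but do not resolve it. Second, the paper's amplification is chosen strong enough that no JVV/self-reducibility step is needed: replacing each matched edge $\{v_0,v_1\}$ by a chain-of-boxes gadget with $\ell = n^2$ boxes makes a path on $k$ vertices correspond to $2^{k\ell}$ blossoms, so since there are at most $2^{O(n\log n)}$ paths in total, a single uniform sample lands on a longest-path blossom with probability $1-o(1)$. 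Your detour through a JVV-style FPRAS is both heavier and itself nontrivial (the self-reducibility of ``count blossoms through $w$'' to alternating-path counting subproblems would need to be verified); your alternative of parallel-edge amplification is closer to the paper's idea, but once the amplification dominates, the counting step is unnecessary---one sample already decides the NP-hard question, giving NP $\subseteq$ RP directly.
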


\begin{proof}
We reduce from the problem of finding the longest $s$-$t$-path in
a directed graph $H$ (ND29 in~\cite{GJ}). We construct an instance
of {\sc Sampling Blossoms}, that is, $G$ and $M$ as follows. For
every $v\in V(H)$ we add two vertices $v_0,v_1$ into $V(G)$ and
also add $\{v_0,v_1\}$ into $M$. For every edge $(u,v)\in E(H)$ we
add edge $\{u_1,v_0\}$ into $E(G)$. Finally we add $w,r$ into
$V(H)$ and $\{w,s_0\},\{t_1,w\}$ into $E(H)$.

Note that there is one-to-one correspondence between blossoms that
contain $w$ in $G$ and $s$-$t$-paths in $H$. We now modify $G$ to
``encourage longer paths''. We replace each $\{v_0,v_1\}$ edge in
$G$ by a chain of boxes (with $\ell$ boxes) and replace
$\{v_0,v_1\}$ in $M$ by the unique perfect matching of the chain
of boxes. In the modified graph $G$ for every $s$-$t$-path $p$ in
$H$ there are now $2^{k \ell}$ blossoms that contain $w$ in $G$,
where $k$ is the number of vertices in $p$.

Taking $\ell = n^2$ a uniformly random blossom that contains $w$
in $G$ will with probability $1-o(1)$ correspond to a longest
$s$-$t$-path in $H$ (the number of $s$-$t$-paths is bounded by
$(n+1)^n = 2^{O(n\log n)}$ and hence the fraction of blossoms
corresponding to non-longest $s$-$t$-paths is $2^{O(n\log n)}
2^{-n^2} = o(1)$). \qed
\end{proof}

\section{A Recursive Algorithm}
\label{sec:EG}

We now explore a new recursive algorithm for counting matchings, based on the
Gallai--Edmonds decomposition. In the worst case, this algorithm may still
require exponential time. However, for graphs that have additional structural
properties, for example, those that are ``sufficiently close to bipartite'' in
a sense that will be made precise, our recursive algorithm runs in polynomial
time. In particular, it will run efficiently on examples similar to those used
to prove torpid mixing of Markov chains in the previous section.

We now state the algorithm.  It requires as a subroutine an algorithm for
computing the permanent of the bipartite adjacency matrix of a bipartite graph
$G$ up to accuracy $\eps$. We denote this subroutine by $\permanentalg(G,
\eps)$.  The $\permanentalg$ subroutine requires time polynomial in $|V(G)|$
and $1/\eps$ using the algorithm of Jerrum, Sinclair, and
Vigoda~\cite{JSV}, but we use it as a black-box.

\begin{algorithm}
    \caption{Recursive algorithm for approximately counting the number of perfect matchings in
    a graph}\label{alg:recursive}
    \begin{algorithmic}[1]
        \Procedure{Recursive-Count}{$G,\eps$}
            \State If $V(G) = \emptyset$, return $1$.
            \State Choose $u \in V(G)$.
            \State Compute the Gallai--Edmonds decomposition of $G-u$.
            \ForAll{$v \in D(G-u)$}
                \State $H_v \gets$ the connected component of $G-u$ containing $v$
                \State $m_v \gets$ \Call{Recursive-Count}{$H_v - v, \eps/(2n)$}
            \EndFor
            \State $m_C \gets$ \Call{Recursive-Count}{$C(G-u), \eps/3$}
            \State Let $X = A(G-u) \cup \{u\}$, and let $Y$ be the set of connected
            components in $D(G-u)$. Let $G'$ be the bipartite graph on $(X,Y)$
            defined as follows: for every $x \in X$ and $H \in Y$, if $x$ has any
            neighbors in $H$ in $G'$, add an edge $\{x, H\}$ in $G'$ with weight
        \[
            w(x, H) = \sum_{v \in N(x) \cap H} m_v\,.
        \]
        \State \textbf{return} $m_C * \permanentalg(G', \eps/3)$
        \EndProcedure
    \end{algorithmic}
\end{algorithm}

We first argue the correctness of the algorithm.

\begin{theorem}\label{thm:main}
    Algorithm~\ref{alg:recursive} computes the number of perfect matchings in $G$ to
    within accuracy $\eps$.
\end{theorem}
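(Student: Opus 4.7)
The plan is to use the Gallai--Edmonds structure theorem on $G - u$ to prove an exact identity $|\mcP| = m_C \cdot \mathrm{perm}(G')$ --- where $m_C$ denotes the true number of perfect matchings of $C(G-u)$ and $\mathrm{perm}(G')$ is the exact permanent with edge weights $w(x, H) = \sum_{v \in N_G(x) \cap H} m_v$, and $m_v$ is the true value of $|\mcP(H_v - v)|$ --- and then to propagate multiplicative approximation errors through the recursion.

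Any perfect matching $M$ of $G$ pairs $u$ with some neighbor $v$, and $M \setminus \{(u,v)\}$ is a maximum matching of $G - u$ with a single hole at $v$. By the Gallai--Edmonds theorem, $v \in D(G-u)$, and $M$ induces (a) a perfect matching on $C(G-u)$; (b) a near-perfect matching on each connected component of $D(G-u)$; and (c) a matching of the vertices of $A(G-u)$ into distinct components of $D(G-u)$. Because $G$ has a perfect matching, the deficiency of $G - u$ is exactly $1$, so the number of components of $D(G-u)$ equals $|A(G-u)| + 1 = |X| = |Y|$. Treating $u$ as an extra element of $X$ assigned to the component containing $v$, a perfect matching $\mu$ of $G'$ records each such assignment; for each $(x, H) \in \mu$, the matching $M$ picks a particular $v \in N_G(x) \cap H$ to pair with $x$ and restricts to a perfect matching of $H - v$ on the remaining vertices. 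Summing over all choices yields
\[
    |\mcP| = m_C \sum_{\mu} \prod_{(x, H) \in \mu} \sum_{v \in N_G(x) \cap H} m_v = m_C \cdot \mathrm{perm}(G').
\]

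Next I would propagate errors by induction on $|V(G)|$. By the inductive hypothesis each recursive subcall returns $\tilde{m}_v = (1 \pm \eps/(2n)) m_v$, and since $\tilde{w}(x, H)$ is a nonnegative sum of such estimates, $\tilde{w}(x, H) = (1 \pm \eps/(2n)) w(x, H)$. The weighted permanent is a nonnegative sum of products of exactly $|X| \le n$ such weights, so replacing $w$ with $\tilde{w}$ introduces a further multiplicative factor of at most $(1 + \eps/(2n))^n \le e^{\eps/2}$, which is $1 \pm O(\eps)$. The $\permanentalg$ call on $G'$ and the recursive call for $m_C$ each contribute an additional $(1 \pm \eps/3)$ factor, and composing these three multiplicative errors gives overall accuracy within $1 \pm \eps$, possibly after a mild tightening of constants in the subroutine accuracy parameters.

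The main obstacle I expect is verifying the exact identity above: confirming that $|X| = |Y|$ holds precisely when $G$ has a perfect matching, and that the Gallai--Edmonds decomposition yields a genuine bijection rather than a mere injection. Edge cases must also be handled: if $G$ has no perfect matching, then either the deficiency of $G - u$ exceeds $1$ (so $|Y| > |X|$ and $\mathrm{perm}(G') = 0$) or $u$ has no neighbor in $D(G-u)$ (so every weight $w(u, H)$ vanishes and again $\mathrm{perm}(G') = 0$); in either case the algorithm correctly returns $0$. Once the combinatorial identity is established, the remaining error propagation reduces to routine multiplicative bookkeeping.
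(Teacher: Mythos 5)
Your proposal is correct and follows essentially the same route as the paper: induction on $|V(G)|$, the Gallai--Edmonds decomposition of $G-u$ to establish the exact identity $|\mcP(G)| = m_C \cdot \mathrm{perm}(G')$, and routine propagation of multiplicative errors through the recursive estimates, the weighted permanent, and the $\permanentalg$ subroutine. You are somewhat more explicit than the paper about the $m_C$ factor in the identity and about the degenerate cases where $G$ has no perfect matching, but these are refinements rather than a different argument.
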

\begin{proof}
    We show that the algorithm is correct for graphs on $n$ vertices, assuming
    it is correct for all graphs on at most $n-1$ vertices.

    We claim that permanent of the incidence matrix of $G'$ defined on line 10
    equals the number of perfect matchings in $G$. Indeed, every perfect
    matching $M$ of $G$ induces a maximum matching $M_u$ on $G-u$. By the
    Gallai--Edmonds theorem, $M_u$ matches each element of $A(G')$ with a
    vertex from a distinct component of $D(G')$, leaving one component of
    $D(G')$ unmatched.  Vertex $u$ must therefore be matched in $M$ with a
    vertex from the remaining component of $D(G')$.  Therefore, $M$ induces a
    perfect matching $M'$ on $G'$. Now let $H_x \in Y$ be the vertex of $G'$
    matched to $x$ for each $x \in X$. Then the number of distinct matchings of
    $G$ inducing the same matching $M''$ on $G''$ is exactly
    \[
        \prod_{x \in X}\sum_{v \in N(x) \cap H_x} m_v = \prod_{x \in X} w(x,
        H_x)
    \]
    which is the contribution of $M'$ to the permanent of $G'$. Similarly,
    from an arbitrary matching $M'$ of $G'$, with $H_x$ defined as above, we
    obtain $\prod_{x \in X} w(x, H_x)$ matchings of $G$, proving the claim.

    Hence, it suffices to to compute the permanent of the incidence matrix of
    $G'$ up to accuracy $\eps$.  We know the entries of the incidence matrix
    up to accuracy $\eps/(2n)$, and $(1 + \eps/(2n))^{n/2} < 1 + \eps/3$ for $\eps$
    sufficiently small.  Therefore, it suffices to compute the permanent of our
    approximation of the incidence matrix up to accuracy $\eps/3$ to get
    overall accuracy better than $\eps$. \qed
\end{proof}

The running time of Algorithm~\ref{alg:recursive} is sensitive to the choice of
vertex $u$ on line 3. If $u$ can be chosen so that each component of
$D(G-u)$ is small, then the algorithm is an efficient divide-and-conquer
strategy. More generally, if $u$ can be chosen so that each component of
$D(G-u)$ is in some sense ``tractable'', then an efficient divide-and conquer
strategy results. In particular, since it is possible to exactly count the
number of perfect matchings in a factor-critical graph of bounded order in
polynomial time, we obtain an efficient algorithm for approximately counting
matchings in graphs whose factor-critical subgraphs have bounded order. This is
the sense in which Algorithm~\ref{alg:recursive} is efficient for graphs
``sufficiently close'' to bipartite.

\begin{theorem}\label{thm:fpt-alg}
    Suppose every factor-critical subgraph of $G$ has order at most $k$. Then
    the number of perfect matchings in $G$ can be counted to within accuracy
    $\eps$ in time $2^{O(k)} \poly(n,1/\eps)$.
\end{theorem}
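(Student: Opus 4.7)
The plan is to run Algorithm~\ref{alg:recursive} with one modification: whenever the Gallai--Edmonds decomposition of $G-u$ produces a factor-critical component $H$ of $D(G-u)$, I would compute all values $m_v = |\mathcal{P}(H-v)|$ for $v \in V(H)$ in one batch via a specialized subroutine exploiting the order bound, rather than recursing into \textsc{Recursive-Count} on each $H-v$ individually. Correctness of the overall algorithm is unaffected (the proof of Theorem~\ref{thm:main} uses only the values of the $m_v$'s, not how they are obtained), and the hypothesis that every factor-critical subgraph of $G$ has order at most $k$ is inherited by every subgraph encountered in the recursion, so the subroutine always applies.

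The key supporting lemma I would prove is: for any factor-critical graph $H$ of order $r \le k$, the entire collection $\{|\mathcal{P}(H-v)| : v \in V(H)\}$ can be computed exactly in time $2^{O(k)}\poly(|V(H)|)$. I would argue this by dynamic programming on an ear decomposition $C_0 \cup C_1 \cup \cdots \cup C_r$. The internal vertices of an odd ear form an even-length path attached to the rest of the graph at (at most two) ``attachment'' endpoints; a perfect matching restricted to such an ear is one of a bounded number of structural types, the two main cases being that the internal vertices match entirely among themselves (with both attachment endpoints matched outside the ear) or that the attachment endpoints are matched along the ear to its extreme internal vertices. Tracking the matched/unmatched status of the $O(k)$ attachment vertices after processing each ear yields a DP table with $2^{O(k)}$ states and polynomial-time transitions, and running the DP once per choice of removed vertex (or more cleverly, once overall while marking $v$) gives the stated bound.

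With the modification in place, the running time satisfies
\[
    T(n,\eps) \le T(|C(G-u)|,\eps/3) + 2^{O(k)}\poly(n,1/\eps),
\]
where the second term absorbs the subroutine time summed over all components of $D(G-u)$ (whose total size is at most $n$), the call to $\permanentalg(G',\eps/3)$, and the polynomial-time Gallai--Edmonds computation. Since $|C(G-u)| < n$, unfolding over at most $n$ recursive levels yields $T(n,\eps) = 2^{O(k)}\poly(n,1/\eps)$. The error analysis is essentially identical to Theorem~\ref{thm:main}: the $m_v$'s are now exact, so relative error enters only through the $\permanentalg$ calls, one per recursion level, and the geometric shrinkage $\eps \mapsto \eps/3$ keeps the cumulative multiplicative error below $\eps$.

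The main obstacle is the dynamic programming in the supporting lemma. The DP state must record exactly enough about the partial matching to correctly process subsequent ears while not exploding, and I must carefully handle the edge cases of cycle-ears (where the two attachment endpoints coincide), ears whose attachment vertices coincide with attachment vertices of earlier or later ears, and the case that the deleted vertex $v$ is itself an internal vertex of an ear. Once these details of the ear-based DP are settled, combining it with the correctness and error analysis of Theorem~\ref{thm:main} is a routine exercise. \qed
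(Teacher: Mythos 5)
Your high-level strategy matches the paper's: replace the recursive calls on the factor-critical pieces $H_v - v$ with a $2^{O(k)}$-time subroutine that exploits the order bound. However, there are two substantive points of divergence, one of which is a genuine gap.

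For the subroutine itself, you propose a dynamic program over the ear decomposition, tracking the matched status of the $O(k)$ attachment vertices. The paper instead uses a \emph{contraction} argument: using $\sum_{u\in V(H)}(d_u-2)=2(k-1)$, it repeatedly contracts degree-2 vertices of $H-v$ (a degree-2 vertex $u$ with neighbors $w_1,w_2$ is contracted into a single merged vertex, which is a bijection on perfect matchings), arriving at a multigraph with $O(k)$ edges whose perfect matchings can be brute-forced in $2^{O(k)}$ time. Both routes appear viable, but the contraction argument is shorter and sidesteps the edge cases you flag (coinciding attachment endpoints, the location of $v$, cycle-ears). This is a genuinely different route to the supporting lemma, and either would serve.

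The real gap is in the handling of $C(G-u)$ and the resulting recurrence. You keep the recursion $m_C \gets \textsc{Recursive-Count}(C(G-u),\eps/3)$, bound the depth by $n$, and conclude $T(n,\eps)=2^{O(k)}\poly(n,1/\eps)$. This does not follow: with $\eps\mapsto\eps/3$ at each level and up to $n$ levels, the $\permanentalg$ call at depth $d$ runs at accuracy $\eps/3^d$, costing $\poly(n,3^d/\eps)$, and for depth $\Omega(n)$ this is exponential. Concretely, take $G$ to be a disjoint union of $n/4$ copies of $K_4$: then $k=1$, yet $C(G-u)$ is all but one component and the recursion depth is $n/4$, so your recurrence gives $\poly(3^{n/4}/\eps)$, not $2^{O(1)}\poly(n,1/\eps)$. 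The paper avoids this entirely with a preprocessing step: delete every edge not contained in some perfect matching, split into connected components $G_i$, and run the algorithm on each with accuracy $\eps/(2n)$. After this preprocessing one can prove $C(G_i-u)=\emptyset$ for every $u\in V(G_i)$ (any vertex of $C(G_i-u)$ would have all its non-$u$ neighbors also in $C(G_i-u)$, which with connectivity and the parity of components of $C$ leads to a contradiction with the ``every edge in a PM'' property), so line 8 disappears and there is no recursion at all — just $O(n)$ invocations of the $2^{O(k)}$ subroutine and one $\permanentalg$ call per component. You would need either to add this preprocessing and establish $C(G_i-u)=\emptyset$, or to replace the geometric $\eps/3$ schedule with an additive budget (e.g.\ $\eps/(2n)$ per level) and separately bound the recursion depth; as written, the running-time claim is unsupported.
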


The essential idea of the proof is to first observe that a factor-critical
graph can be shrunk to a graph with $O(k)$ edges having the same number of
perfect matchings after deleting any vertex. The number of perfect matchings
can then be counted by brute force in time $2^{O(k)}\poly(n)$. This procedure
replaces the recursive calls on line $6$ of the algorithm.

\begin{proof}
We first observe that if $H$ is a factor-critical graph of order $k$ with
$n$ vertices, then the number of perfect matchings in $H-v$ can be counted
exactly in time $2^{O(k)} \poly(n)$ for every vertex $v$. Writing $d_u$ for the degree of a
vertex $u$, we have
    \begin{equation}\label{eq:fc-degree}
    \sum_{u \in V(H)} (d_u - 2) = 2(k-1),
    \end{equation}
    since adding
one ear to a graph adds some number of vertices of degree $2$, and
increases the degree of two existing vertices by one each, or one vertex by
two. Fix $v \in H$, and suppose there is a vertex $u$ of degree $2$ in $H - v$,
with neighbors $w_1$ and
$w_2$. Let $H'$ denote the multigraph obtained from $H - v$ by contracting the edges from
$u$ to $w_1$ and $w_2$, so $H'$ has two fewer
vertices than $H$, and has a vertex $w$ with the same multiset of neighbors as
$w_1$ and $w_2$ (excluding $v$). Then there is a bijection between the perfect
matchings of $H'$ and of $H-v$; each perfect matching of $H'$ lifts to a
matching of $H-v$ with a hole at $u$ and exactly one of $w_1$ or $w_2$, and
each perfect matching of $H-v$ projects to a perfect matching of $H'$ by
ignoring the matched edge at $u$. Hence, we may contract away all degree-$2$
vertices of $H-v$, and obtain a graph with the same number of perfect matchings in which
every vertex (save at most two of degree $1$, the former neighbors of $v$)
has degree at least $3$. Then since the contraction does not change the sum in
    Eq.~\eqref{eq:fc-degree}, we have
    \[
        3(|V(H')|-2) \le \sum_{u \in V(H')} d_u \le 2(k-1) + 2|V(H')|
        \]
    and hence $H'$ has $O(k)$ edges, and the perfect matchings of $H'$ can
    be enumerated in time $2^{O(k)}$.

    Now we modify Algorithm~\ref{alg:recursive} to run in time $2^{O(k)}
    \poly(n, 1/\eps)$. First, we delete all edges not appearing in any perfect
    matching, call \Call{Recursive-Count}{$G_i,\eps/(2n)$} on each
    connected component $G_i$, and multiply the results of all of these calls
    to estimate the number of perfect matchings in $G$. We have $C(G_i-u) =
    \emptyset$ for each such component $G_i$ and every vertex $u \in V(G_i)$,
    since edges leaving $C(G_i - u)$ cannot appear in any matching of $G_i - u$.
    Therefore, the recursive call on line 8 of the algorithm can be eliminated.
    On line 6, instead of computing $m_v$ by a recursive
    call, we instead use the procedure described above to compute it in time
    $2^{O(k)}$. Hence, Algorithm~\ref{alg:recursive} requires $O(n)$ calls to a
    procedure that takes time $2^{O(k)}$. The other lines of
    Algorithm~\ref{alg:recursive} require only polynomial time in $n$ and
    $1/\eps$, so in all Algorithm~\ref{alg:recursive} requires time
    $2^{O(k)}\poly(n, 1/\eps)$. \qed
\end{proof}

We note that Theorem~\ref{thm:fpt-alg} is proved by eliminating recursive calls in
the algorithm. Although the recursive calls of Algorithm~\ref{alg:recursive}
can be difficult to analyze, they can also be useful, as we now demonstrate by
showing that Algorithm~\ref{alg:recursive} runs as-is in polynomial time on the
counterexample graph of Definition~\ref{def:counterexample}, for appropriate choice of
the vertex $u$ on the line 3 of the algorithm.

\begin{theorem}\label{thm:counterexample-alg}
    Algorithm~\ref{alg:recursive} runs in polynomial time on the counterexample
    graph of Definition~\ref{def:counterexample}, for appropriate choice of the vertex
    $u$ on the line 3 of the algorithm
\end{theorem}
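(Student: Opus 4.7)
The plan is to run Algorithm~\ref{alg:recursive} on $G_k$ with a carefully chosen vertex $u$ at each recursion level.  At the top level I would take $u = t_1$, which breaks the cyclic $C_{12}$ backbone of $G_k$ and linearizes the four-gadget structure, with $u_1$ and $v_4$ as dangling endpoints.  The first task is to compute the Gallai--Edmonds decomposition of $G_k - t_1$. Since $|V(G_k - t_1)|$ is odd, each maximum matching leaves exactly one vertex unmatched; the set $D(G_k - t_1)$ consists of the vertices that can serve as this single hole. A case analysis, parameterized by ``gadget orientations'' $x_i \in \{0,1\}$ indicating whether $u_i, v_i$ are matched internally via a perfect matching of $H_i$ or externally with the unique matching in $\mcN_{H_i}(u_i, v_i)$, enumerates the possible hole patterns in the style of the proof of Lemma~\ref{lem:torpid-gadget}.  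For example, $u_1$ is realizable as the hole via the configuration $x = (1, 0, 1, 0)$ and the external matches $v_1\text{-}t_2, u_3\text{-}t_3, v_3\text{-}t_4$; a similar analysis handles $v_4$, each $t_j$, and internal gadget vertices for which an appropriate near-perfect matching exists.

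With the decomposition in hand, the bipartite graph $G'$ constructed on line~10 has size $\poly(n)$, so the \permanentalg{} subroutine runs in $\poly(n, 1/\eps)$ time.  At subsequent recursion levels I would choose $u$ analogously: for any subgraph containing a chain-of-boxes fragment, I would pick $u$ so that the resulting Gallai--Edmonds decomposition isolates the chain (which is bipartite) either into $C$ or into a bipartite component of $D$; since any bipartite graph admits a trivial Gallai--Edmonds decomposition with $D$ a set of singletons, Algorithm~\ref{alg:recursive} on such a subgraph reduces to a single polynomial-time \permanentalg{} call.  The remaining ``core'' structures of the $H_i$ (the 12-cycle plus the $a$-$b$ edge) are of constant size, so the recursive calls that encounter them finish in constant time.

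The principal obstacle is carrying out the explicit Gallai--Edmonds analysis of $G_k - t_1$ and of the resulting recursive subproblems: one must track precisely which vertices lie in $D$, how they cluster into connected components, and verify that each component is either bipartite or of constant size so that the inductive strategy continues to apply.  Once this verification is in place, polynomial running time follows from bounding the recursion tree size by $\poly(n)$, using that each recursive call strictly reduces the vertex count and does polynomial work per call; the accumulated accuracy loss is controlled by the $\eps/(2n)$ and $\eps/3$ splits already analyzed in the proof of Theorem~\ref{thm:main}.
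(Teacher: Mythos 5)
The paper's proof hinges on one crisp structural observation that your proposal never makes: deleting the two vertices $u$ and $v$ from a single torpid-mixing gadget $H$ destroys both $7$-cycles (the only odd cycles, since the chains of boxes join endpoints by even-length paths), so $H - u - v$ is \emph{bipartite}. Consequently, by always choosing the line-3 vertex from the set $U$ of the eight copies of $u_i, v_i$, the algorithm removes all of $U$ within at most $8$ levels along any branch of the recursion tree; after that, the remaining subgraph sits inside $G_k - U$, which is bipartite, so every factor-critical component of $D$ is a singleton and the recursion on those components is trivial. The paper then picks $u$ at the end of a chain of boxes in the bipartite phase to keep the depth of the $C$-recursion bounded.

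Your proposal instead begins with $u = t_1$ and then relies on an unspecified ``analogous'' choice at each later level. The problem is that $t_1$ does nothing to the odd $7$-cycles living inside the gadgets $H_i$: after removing $t_1$, the component of $D(G_k - t_1)$ containing, say, $u_1$ is a factor-critical graph that can include the $7$-cycle through $u_1$ together with the attached chains of boxes, so it is \emph{not} of constant size and is not bipartite. This undercuts two of your stated claims: that the chains of boxes can be ``isolated'' into bipartite pieces of the decomposition (the Gallai--Edmonds components do not respect the chain/core boundary -- they can straddle it), and that the ``core'' $12$-cycle-plus-chord structures are ``of constant size'' in the recursive subcalls (the relevant factor-critical subgraphs are not confined to the cores). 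You flag the explicit Gallai--Edmonds analysis of $G_k - t_1$ as ``the principal obstacle''; the paper avoids exactly this obstacle by targeting $u_i, v_i$ so that the factor-critical structure collapses to singletons uniformly, without any case analysis of hole patterns. To close the gap in your argument you would either need to discover the $H - u - v$ bipartiteness fact and switch to choosing from $U$, or else carry out in full the component-by-component analysis that you currently only sketch, together with an argument bounding the recursion depth; as written, neither is done.
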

\begin{proof}
    After deleting the vertices $u$ and $v$ from the torpid mixing gadget $H$ in
    Figure~\ref{fig:gadget}, no odd cycles remain the graph $H$. Let $U$ denote
    the set of all four copies of the vertices $u$ and $v$ appearing in the
    counterexample graph $G$, so $|U| = 8$. With every recursive call
    \Call{Recursive-Count}{$G', \eps'$}, if $U \cap V(G') \ne \emptyset$, we
    choose $u \in U\cap V(G')$. Hence, after $8$ recursive calls, there are no
    odd cycles remaining in $G'$, and each factor-critical subgraph is a single
    vertex. When $U \cap V(G') = \emptyset$, we choose $u$ so that $A(G'-u) =
    \Omega(n)$---for example taking $u$ at one end of a chain of boxes---so
    that the overall recursive depth is $O(1)$. \qed
\end{proof}

\subsubsection*{Acknowledgements}
This research was supported in part by NSF grants CCF-1617306, CCF-1563838,
CCF-1318374, and CCF-1717349. The authors are grateful to Santosh Vempala for
many illuminating conversations about Markov chains and the structure of
factor-critical graphs.


\begin{thebibliography}{99}

\bibitem{Broder}
A. Z. Broder.
How hard is it to marry at random? (On the approximation of the permanent),
{\it Proceedings of the 18th Annual ACM Symposium on Theory of Computing}
(STOC), 50--58, 1986.  Erratum in {\it Proceedings of the
20th Annual ACM Symposium on Theory of Computing}, p.~551, 1988.


\bibitem{Edmonds}
J. Edmonds. Paths, trees, and flowers.
{\em Canadian Journal of Mathematics}, 17:449--467, 1965.

\bibitem{GallaiA}
T. Gallai. Kritische Graphen II.
{\em Magyar Tud. Akad. Mat. Kutat\'{o} Int. K\H{o}zl.}, 8:273--395, 1963.

\bibitem{GallaiB}
T. Gallai.
Maximale systeme unabh\"{a}ngiger kanten.
{\em Magyar Tud. Akad. Mat. Kutat\'{o} Int. K\H{o}zl.}, 9:401--413, 1964.

\bibitem{GJ}
M. R. Garey and D. S. Johnson. {\em Computers and Intractability:
A Guide to the Theory of NP-Completeness}. W. H. Freeman \& Co.,
New York, NY, 1979.

\bibitem{JS}
M. Jerrum and A. Sinclair.
Approximating the permanent.
{\em SIAM Journal on Computing}, 18(6):1149--1178, 1989.

\bibitem{JSV}
M. Jerrum, A. Sinclair, and E. Vigoda. A polynomial-time
approximation algorithm for the permanent of a matrix with
non-negative entries. {\em Journal of the ACM}, 51(4):671--697,
2004.

\bibitem{JVV}
M. R. Jerrum, L. G. Valiant, and V. V. Vazirani.
Random generation of combinatorial structures from a uniform distribution.
{\em Theoret. Comput. Sci.}, 43(2-3):169--188, 1986.


\bibitem{Kasteleyn}
P. W. Kasteleyn.
Graph theory and crystal physics.
In {\em Graph Theory and Theoretical Physics},
pages 43--110. Academic Press, London, 1967.

\bibitem{LWP}
D. A. Levin, Y. Peres, and E. L. Wilmer.
{\em Markov Chains and Mixing Times}.
American Mathematical Society, Providence, RI, 2009.

\bibitem{Lovasz72}
L. Lov\'asz.  A note on factor-critical graphs.
{\em Studia Sci. Math. Hungar}, 7(11):279--280, 1972.

\bibitem{Schrijver}
A. Schrijver. {\em Theory of Linear and Integer Programming}. Wiley, 1998.

\bibitem{Sinclair}
A. J. Sinclair.
{\em Algorithms for Random Generation and Counting: A Markov Chain Approach},
Birkh\"{a}user, 1988.

\bibitem{Valiant}
L. G. Valiant.
The complexity of computing the permanent.
{\em Theoretical Computer Science}, 8(2):189--201, 1979.

\end{thebibliography}
\end{document}